\newtheorem{theorem}{Theorem}
\newtheorem{proposition}{Proposition}
\newtheorem*{notation}{Notation}
\newtheorem{corollary}{Corollary}
\newtheorem{remark}{Remark}
\newtheorem{definition}{Definition}
\newtheorem{assumption}{Assumption}
\newtheorem{lemma}{Lemma}
\definecolor{darkcandyapplered}{rgb}{0.64, 0.0, 0.0}
\definecolor{frenchblue}{rgb}{0.0, 0.45, 0.73}
\definecolor{burntorange}{rgb}{0.8, 0.33, 0.0}
\definecolor{fluorescentorange}{rgb}{1.0, 0.75, 0.0}
\definecolor{green(ryb)}{rgb}{0.4, 0.69, 0.2}
\begin{document}
\title{Adaptive Control of Heterogeneous Platoons with Guaranteed Collision Avoidance
\\
\thanks{This work is supported by “IIT Palakkad Technology IHub Foundation Doctoral
Fellowship IPTIF/HRD/DF/027/SEP38”.\par 
Ashutosh Chandra Pandey and Sayan Basu Roy are with the Department
of Electronics and Communication Engineering, Indraprastha
Institute of Information Technology Delhi, New Delhi, India (emails : ashutoshpa@iiitd.ac.in, sayan@iiitd.ac.in). \par
Simone Baldi is with the School of Mathematics, Southeast University, Nanjing, China (email : s.baldi@seu.edu.cn)}
}
  
\author{{Ashutosh Chandra Pandey}, {Sayan Basu Roy, \textit{Member, IEEE}} {and} {Simone Baldi, \textit{Senior Member, IEEE}}}

\maketitle

\begin{abstract}
This work proposes a framework for Cooperative Adaptive Cruise Control of a vehicular platoon characterized by unidirectional communication and heterogeneous parameters. In the proposed framework, the actual (heterogeneous) platoon is made to converge to a reference (homogeneous) platoon via adaptive laws designed using of set-theoretic model reference adaptive control.
Yet, in contrast to the state-of-art that is based on ensuring collision avoidance on the reference platoon dynamics only, the approach we propose can ensure collision avoidance on the actual platoon dynamics. This result is possible thanks to the introduction of a novel concept of virtual platoon, only used for analysis, but that does not interact with the actual platoon.
The stability and convergence properties of the proposed framework are established using Lyapunov-based analysis in conjunction with the aforementioned virtual platoon concept.
\end{abstract}
\begin{IEEEkeywords}
virtual platoon, collision avoidance, external positivity, cooperative adaptive cruise control, heterogeneous vehicular platoon, string stability.
\end{IEEEkeywords}
\IEEEoverridecommandlockouts
\IEEEpeerreviewmaketitle
\section{Introduction} \label{Introduction}
In recent years, the continuous advances in vehicle-to-vehicle communication have been opening the field of connected automated vehicles, paving the way to cooperative automated driving. Several studies suggest that cooperative automated vehicles can lead to substantial increase in traffic capacity due to the deployment of vehicular platoons \cite{jia2015survey,shladover2005automated}. Nowadays, many vehicles are already equipped with adaptive cruise control (ACC) technology, enabling to maintain a desired inter-vehicle distance \cite{783618,guo2012autonomous,4220653} with the help of on-board sensors like lasers and radars. Cooperative Adaptive Cruise Control (CACC) extends the ACC technology by adding wireless vehicle-to-vehicle communication, thus enhancing the connectivity of vehicles beyond on-board sensing \cite{6747309,10202202}. As compared to ACC, CACC ensures a smaller inter-vehicle gap \cite{ploeg2013lp}, and provides better performance in terms of string stability \cite{abou2017adaptive} and disturbance decoupling \cite{wijnbergen2020existence} criteria. String stability describes the attenuation of disturbances \cite{ploeg2013lp,8914474,1438390} (e.g., induced by sudden brake or acceleration), as they propagate downstream along the string of vehicles composing the platoon.\\ 
The most standard CACC protocols consider predecessor-follower unidirectional \cite{abou2017adaptive} and sometimes bidirectional \cite{kwon2014adaptive,baldi2020establishing} communication among adjacent vehicles. Typical desirable properties of the platoon are asymptotic stability \cite{swaroop1996string,703548} and string stability \cite{van2006impact,ploeg2013lp,wang2015string}. However, it has been shown that asymptotic stability and string stability cannot automatically ensure collision avoidance within adjacent vehicles in the platoon. External positivity has been regarded as a promising property to ensure collision avoidance within the platoon \cite{lunze2018adaptive,schwab2021design}. External positivity guarantees a positive output (i.e., inter-vehicle distance) for a positive input (i.e., the velocity of the predecessor vehicle). Using external positivity, collision avoidance is studied in \cite{lunze2018adaptive,schwab2021design} for ACC, and later extended in \cite{10130071} for CACC.\par
One of the pioneering works in CACC is \cite{ploeg2013lp} which assumes a known homogeneous actual platoon (AP). If the AP is heterogeneous, then homogeneous dynamics can be used to design a homogeneous reference platoon (RP) in the framework of model reference adaptive control (MRAC) \cite{abou2017adaptive,baldi2020establishing}, so as to deal with the situation that not only the AP is heterogeneous, but its parameters are possibly unknown.
The main idea behind the MRAC strategy is that desirable properties like asymptotic stability, string stability and external positivity can be imposed on the RP. Then, the AP can asymptotically attain the desirable properties by converging to the dynamics of the RP. Although effective, this strategy has the drawback that the desirable properties can only be attained asymptotically, which implies that external positivity (and thus collision avoidance) for the RP does not necessarily imply collision avoidance on the AP at all times. This represents a major open problem in the literature that is addressed in this work by introducing a new platoon concept and new analysis tools.\par 
A first significant contribution of this work is the introduction of a virtual platoon (VP) concept that, differently from the RP, is not part of the control design, but is only used for analysis. The advantage of the VP is that, because it does not interact with the AP as it happens with the RP (refer to Fig. \ref{CACC_Platoon_Structure}), it can be used to ensure properties of string stability and collision avoidance for the entire platoon, differently from the standard analysis in the literature based on a predecessor-follower pair \cite{abou2017adaptive, baldi2020establishing}. Our analysis shows that both RP and AP converge to VP asymptotically; hence, external positivity and string stability properties are established for the entire AP asymptotically. Then, a second significant contribution of this work is to prove that not only the AP asymptotically achieves external positivity, but that it actually attains collision avoidance at all times. This marks a difference with ensuring collision avoidance for the RP only. This result is possible thanks to the introduction of a set-theoretic MRAC architecture \cite{arabi2018set}, in place of a standard MRAC architecture, that allows to use a generalized barrier Lyapunov function \cite{tee2009barrier}. A novel Lyapunov analysis is presented to guarantee that not only the tracking errors between the AP and RP converge asymptotically to zero, but also that a worst-case user-defined bound can be guaranteed for the tracking error. This bound in turn can be used to ensure collision avoidance at all times not only for the RP but also for the AP. In this strategy, the VP model acts as a robust analysis tool and facilitates the appropriate use of Barbalat's Lemma \cite{Khalil:1173048}. With the aforementioned tracking error bound and the VP's external positivity condition, we derive a lower bound on the standstill distance of the AP ensuring collision avoidance at all times. \par
The paper is organized as follows. Preliminary information is provided in Section \ref{Preliminaries}. Section \ref{Platoon Dynamics} provides the dynamics for the CACC platooning and the baseline control law. Section \ref{Virtual Platoon Dynamics} presents the dynamics of the VP and analysis of asymptotic stability, external positivity, and string stability. In Section \ref{Adaptive CACC Augmentation}, RP dynamics, adaptive augmentation, and platoon convergence are given. Section \ref{Collision Avoidance} provides the condition for collision avoidance. 
\begin{notation}
   $\mathbb{R}$, $\mathbb{R}^{n }$ and $\mathbb{R}^{n \times m}$ represent the sets of real numbers, real vectors of dimension $n$ and real matrices of dimension ${n \times m}$, respectively. $\mathbb{R}_+$ denotes set of positive real numbers. $\mathbb{N}$ is set of natural numbers. We use $\left | \cdot \right |$ to denote the absolute value of a constant and $\left\|\cdot\right\|$ for the $2$ norm of a vector and the Frobenius norm of a matrix. The notation $\sup\left | \cdot\right |$ is the least upper bound of a function. $\mathfrak{L}^{-1}$ denotes the inverse Laplace transform. $\mathcal{L}_{2}$ and $\mathcal{L}_{\infty}$ denote the space of square integrable signals and the space of bounded signals, respectively. $I_{n \times n}$ is the identity matrix of dimension ${n \times n}$. The zero and the one vector of dimension $n$ are indicated respectively by $0_n = [0 \; 0 \; \cdots \; (n \; \text{times}) \; \cdots \; 0]^T$ and ${1}_n = [1 \; 1 \; \cdots \; (n \; \text{times}) \; \cdots \; 1]^T$, while the zero matrix of dimension $n \times m$ is represented by $0_{n \times m} = [0_n \; 0_n \; \cdots \; (m \; \text{times}) \; \cdots \; 0_n]$. The symbol ``$ \otimes$'' is used for Kronecker product.
\end{notation}
\section{Preliminaries}\label{Preliminaries}

In this section, some key concepts of generalized barrier Lyapunov function \cite{arabi2018set}, string stability \cite{ploeg2013lp,abou2017adaptive}, and external positivity \cite{lunze2018adaptive,10130071} are recalled.\\ 
\begin{definition}\label{blf}
   (Generalized barrier Lyapunov function \cite{arabi2018set}) Let $\left \| \eta \right \|_M = \sqrt{\eta^T M \eta}$ be the weighted Euclidean norm, with $\eta \in \mathbb{R}^p$, and $M \in \mathbb{R}^{p \times p}$ being symmetric positive definite. Consider a function $\psi (\cdot): \Omega_c \rightarrow \mathbb{R}$, where 
    \begin{equation}
        \Omega_c = \left \{\left \| \eta \right \|_M\in \mathbb{R}_+\ | \ \left \| \eta \right \|_M<c\right \},
    \end{equation}
    with $c\in \mathbb{R}_+$ a user-defined constant bound. The function $\psi(\cdot)$ is called a generalized barrier Lyapunov function if the following conditions are satisfied:
    \begin{enumerate}[i)]
        \item If $\left \| \eta \right \|_M = 0$, then $\psi (\left \| \eta \right \|_M) = 0$.
        \item $\psi (\left \| \eta \right \|_M) > 0$ inside $\Omega_c$ excluding $\left \| \eta \right \|_M = 0$.
        \item $\lim_{\left \| \eta \right \|_M \rightarrow c} \psi (\left \| \eta \right \|_M) = \infty$.
        \item $\psi (\left \| \eta \right \|_M)$ is continuously differentiable on $\Omega_c$.
        \item $\psi^{'} (\left \| \eta \right \|_M) > 0$ inside $\Omega_c$, where
        \begin{equation}\label{psi derivative wrt nx}
            \psi^{'} (\left \| \eta \right \|_M) = \frac{d\psi (\left \| \eta \right \|_M)}{d\left \| \eta \right \|_M^2}.
        \end{equation}
        \item Inside $\Omega_c$, the following holds
        \begin{equation}
            2 \psi^{'} (\left \| \eta \right \|_M) \left \| \eta \right \|_M^2 - \psi (\left \| \eta \right \|_M) > 0.
        \end{equation}
    \end{enumerate}
\end{definition}
The term generalized barrier Lyapunov function comes from the fact that the conditions in Definition \ref{blf} generalize the notion of barrier Lyapunov function \cite{tee2009barrier}. A possible generalized barrier Lyapunov function, also used in this work, is
\begin{equation}\label{blf candidate equation}
            \psi (\left \| \eta \right \|_{M}) = \frac{\left \| \eta \right \|_{M}^2}{c -\left \| \eta \right \|_{M}}
        \end{equation}
with $c$ being user-defined. The above function satisfies all the properties in Definition \ref{blf}. \par
\begin{definition}\label{External Positivity}
    (External Positivity \cite{lunze2018adaptive,10130071}) Consider the linear system
    \begin{equation}
    \begin{split}
        &\dot \chi (t) = \mathcal{A} \chi (t) + \mathcal{B}  \mu (t)\\
        & \Upsilon (t)=\mathcal{C} \chi (t) +\mathcal{D} \mu (t)
    \end{split}
    \end{equation}
    with zero initial conditions. Let $G(s) = \mathcal{C}(sI - \mathcal{A})^{-1}\mathcal{B}+\mathcal{D}$ be the transfer function of the system. The system is said to be externally positive (i.e., $ \mu(t)\geq 0, \forall t \geq 0 \Rightarrow \Upsilon(t) \geq 0, \forall t \geq 0$), if and only if its impulse response is non-negative, i.e., 
    \begin{equation}
        g(t)= \mathfrak{L}^{-1}\left \{  G(s) \right \} \geq 0, \qquad \forall t \geq 0.
    \end{equation}
\end{definition}
\begin{definition}\label{string stability}
     (String Stability \cite{ploeg2013lp,abou2017adaptive}) 
     Consider the cascaded interconnected system
    \begin{equation}
    \begin{split}
        &\dot { \chi}_j (t) = \mathcal{A}_j \chi_j (t) + \mathcal{B}_j \Upsilon_{j-1} (t)\\
        & \Upsilon_j (t)=\mathcal{C}_j \chi_j (t)
    \end{split}
    \end{equation}
    with zero initial conditions, $1 \leq j\leq N$, and $\Upsilon_0(t)$ an external input. Let $G_j(s) = \mathcal{C}_j(sI - \mathcal{A}_j)^{-1}\mathcal{B}_j$ be the transfer function from $\Upsilon_{j-1}(\jmath \omega)$ to $\Upsilon_j(\jmath \omega)$. The cascaded interconnected system is said to be string stable if
\begin{equation}\label{String Stability eq}
    \sup_{\omega}|G_j(\jmath  \omega)| = \sup_{\omega} \left|\frac{\Upsilon_j(\jmath \omega)}{\Upsilon_{j-1}(\jmath \omega)}\right| \leq 1, \; \forall \omega, \; \forall j.
\end{equation}
where $\jmath$ is the complex unit.
\end{definition}
\section{Platoon Model}\label{Platoon Dynamics}
Consider a heterogeneous platoon consisting of $N+1$ vehicles as shown in Fig. \ref{unidirectional_platoon}, where $v_j (t)$ is the velocity $\text{(m/s)}$ of vehicle $j$ and $s_{j-1,j}(t)$ is the distance $\text{(m)}$ between vehicle $j$ and its predecessor vehicle $j-1$. Consider the set $\alpha_N = \left \{ j\in \mathbb{N}\mid  1 \leq j\leq N \right \}$ with $j=0$ used for the leading vehicle. The objective of the platoon is to maintain a desired distance $s_{d,j}(t)$  between vehicle $j$ and its predecessor. A constant time headway (CTH) spacing policy \cite{rajamani2002semi} is used to define the desired distance $\text{(m)}$
\begin{equation}\label{CTH}
 s_{d,j}(t) = r_j+h v_j(t)  ,\qquad j \in \alpha_N, \\
\end{equation}
where $h>0$ is the time gap (s) and $r_j>0$ is the standstill distance (m) for vehicle $j$. The distance between vehicle $j$ and vehicle $j-1$ is calculated as
\begin{equation}\label{actual spacing between the vehicle}
s_{j-1,j}(t) = y_{i-1}(t)-y_j(t)-L_j, \qquad j \in \alpha_N, 
\end{equation}
where $y_j(t)\in \mathbb{R}$ is the rear bumper position (m) and $L_j>0$ is length (m) of vehicle $j$ as in Fig. \ref{unidirectional_platoon}. 
Then, the spacing error for vehicle $j$ can be defined as
\begin{equation}\label{error dynamics}
    \begin{split}
         e_j(t)&= s_{j-1,j}(t) - s_{d,j}(t)\\ 
     &=(y_{j-1}(t)-y_j(t)-L_j)- (r_j+h v_j(t)).
    \end{split}
\end{equation}
 In the following, we may omit the time dependence whenever obvious. The dynamics of vehicle $j$ is defined as \cite{abou2017adaptive}
    \begin{gather}\label{platoon dynamics}
        \begin{matrix}
  \begin{pmatrix}
            \dot e_{j}\\
            \dot v_j\\
            \dot a_j
        \end{pmatrix}
        =
        \begin{pmatrix}
            v_{j-1} - v_j - h a_j\\
            a_j \\
            -\frac{1}{\tau_j} a_j +\frac{1}{\tau_j} \Lambda_j u_j 
        \end{pmatrix}, &\quad j \in \alpha _N ,
\end{matrix}
\end{gather}
    where $u_j(t) \in \mathbb{R}$ and $a_j (t) \in \mathbb{R}$ are the input (${\text{m}}/{\text{s}^2}$) and the acceleration (${\text{m}}/{\text{s}^2}$), $ \Lambda_j>0$ is the engine efficiency and $\tau_j>0$ is the engine time constant (s) for vehicle $j$ \cite{abou2017adaptive}. In this work, we take $\Lambda_j$ and $\tau_j$ to be unknown, thus requiring an adaptive design.\\
\begin{figure}[t!]
    \centering
    \includegraphics[height=4cm,width=9cm]{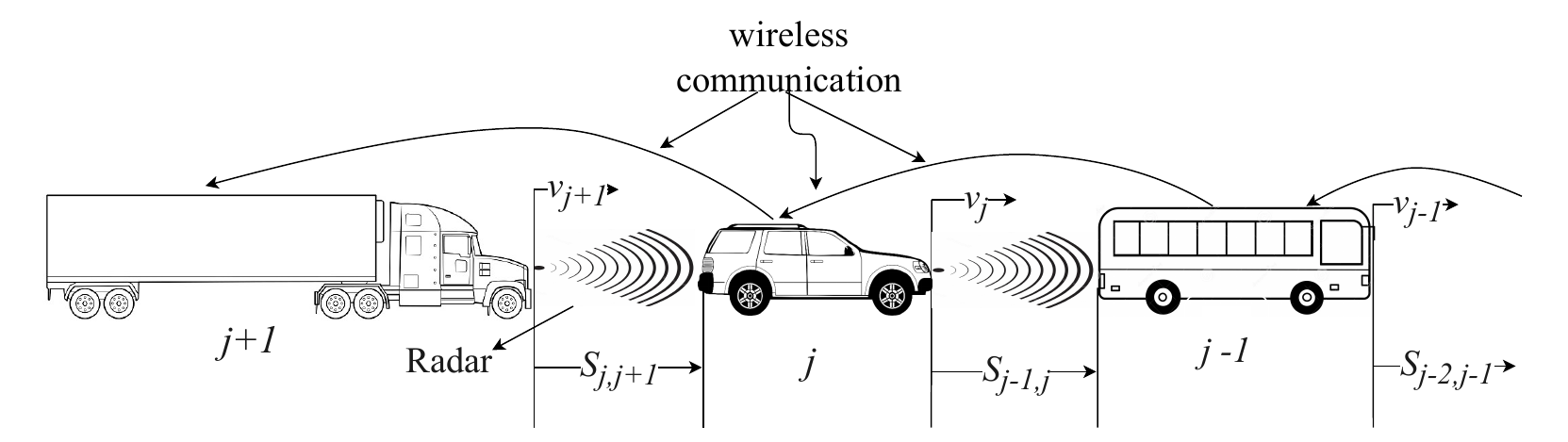}
        \caption{ CACC heterogeneous vehicle platoon equipped with unidirectional communication \cite{abou2017adaptive}}
    \label{unidirectional_platoon}
\end{figure} 
The interconnection of the dynamics \eqref{platoon dynamics} will be referred to as AP. The control objective is to design $u_j$ in \eqref{platoon dynamics} such that
\begin{itemize}
    \item All the spacing errors in the AP converges to zero, i.e., $\lim_{t\rightarrow \infty} e_j(t) = 0$, $\forall j \in \alpha _N $,
    \item collision avoidance is achieved in the AP, i.e., $s_{j-1,j}(t) > 0,\ \forall t\geq 0,\ \forall j\in\alpha_{N}$.
\end{itemize}
Other properties guaranteed by the proposed design (e.g., string stability and external positivity) will be presented later by making use of the VP concept.\\
The control input $u_j(t)$ is taken in the following form \cite{Robustbook}
    \begin{equation}\label{augmented control input}
        u_j (t) = u_{bl,j}(t) + u_{ad,j}(t), \quad \forall j \in \alpha_N \cup \left\{0 \right\}
    \end{equation}
where $u_{bl,j}(t)$ and $u_{ad,j}(t)$ denote the baseline controller and the adaptive controller, respectively. Hereafter, we introduce the baseline controller, while the adaptive controller is designed in Section \ref{Adaptive CACC Augmentation}.
\subsection{Baseline Controller}
 A baseline controller has been proposed in the literature \cite{ploeg2013lp} under the assumption that all vehicles are homogeneous (i.e., $\Lambda_j=1$ and $\tau_j$ is identical $\forall j \in \alpha_N$ in \eqref{platoon dynamics}). Such a controller takes the form
    \begin{equation}\label{baseline control dynamics}
        h \dot u_{bl,j}= -u_{bl,j} + k_p e_j + k_d \dot e_j + u_{bl,j-1},
    \end{equation}    
where $k_p>0$ and $k_d>0$ are the controller design parameters. The feedback from $u_{bl,j-1}$ is obtained via wireless communication, in line with the standard unidirectional communication in a CACC protocol (cf. Fig. \ref{unidirectional_platoon}). The leader vehicle control input is as follows
    \begin{equation}\label{leader control input}
        h \dot u_{bl,0} = -u_{bl,0} + u_{in},
    \end{equation}
where $u_{in}(t) \in\mathcal{L}_ \infty$ is the external input representing the desired acceleration $(\text{m}/\text{s}^2 )$ of the leader vehicle. Since the vehicles dynamic in \eqref{platoon dynamics} are taken heterogeneous, the assumption used in \cite{ploeg2013lp} fails, so that the baseline controller must be augmented by an adaptive term as in \eqref{augmented control input}. 
\begin{remark}
    To facilitate external positivity and string stability analysis, a homogeneous VP is now introduced. It is worth stressing that the VP is only used for analysis and does not interact with the RP or the AP as shown in Fig. \ref{CACC_Platoon_Structure}. This marks a difference with existing literature where the RP is used for analysis as well as for interacting with the AP \cite{abou2017adaptive,10130071}. 
\end{remark}

\section{Virtual Platoon Dynamics and Its Properties}\label{Virtual Platoon Dynamics}
A homogeneous VP is defined by using nominal homogeneous parameters  in the dynamics \eqref{platoon dynamics}- $\tau_j=\bar{\tau}>0$ and $\Lambda_j = 1, \; \forall j \in \alpha_N$ and using the baseline control dynamics in \eqref{baseline control dynamics}. The resulting dynamics is
\begin{gather}\label{new reference dynamics detail}
    \begin{split}
    \underset{\dot x_{v,j}}{\underbrace{
\begin{bmatrix}
\dot{e}_{v,j}\\ 
\dot{v}_{v,j}\\ 
\dot{a}_{v,j}\\ 
\dot{u}_{bl,v,j}
\end{bmatrix}}}
&=
\underset{A_m}{\underbrace{
\begin{bmatrix}
 0 & -1 & -h & 0\\
            0 &  0 & 1 & 0\\
            0 &  0 & -\frac{1}{\Bar{\tau}} & \frac{1}{\Bar{\tau}}\\
            \frac{k_p}{h} & -\frac{k_d}{h} & -k_d & -\frac{1}{h}
\end{bmatrix}}}
\underset{ x_{v,j}}{\underbrace{
\begin{bmatrix}
e_{v,j}\\ 
v_{v,j}\\ 
a_{v,j}\\ 
u_{bl,v,j}
\end{bmatrix}}}\\
&+
\underset{B_w}{\underbrace{
\begin{bmatrix}
            1& 0\\
            0 & 0\\
            0 & 0\\
            \frac{k_d}{h} & \frac{1}{h}
\end{bmatrix}}}
\underset{w_{v,j-1}}{\underbrace{
\begin{bmatrix}
            v_{v,j-1}\\
            u_{bl,v,j-1}
\end{bmatrix}}}, \quad \forall j \in \alpha_N,
\end{split}
\end{gather}
which can be compactly represented as follows
    \begin{equation}\label{brief of new reference dynamics}
       \dot x_{v,j} = A_m x_{v,j} + B_w w_{v,j-1}, \quad \forall j\; \in \:  \alpha_N,
    \end{equation}
    where subscript $v$ denotes the VP, $x_{v,j}(t) \in \mathbb{R}^4$ is the state vector of vehicle $j$ in the VP, and $w_{v,j-1}(t) \in \mathbb{R}^2$ contains signals from the preceding vehicle.\\ 
Using \eqref{leader control input}, the dynamics of the leader of the VP can be written as follows
    \begin{gather}\label{vehicle dynamics for zero}
    \underset{\dot x_{v,0}}{\underbrace{
        \begin{bmatrix}
            \dot e_{v,0}\\ 
\dot v_{v,0}\\ 
\dot a_{v,0}\\ 
\dot u_{bl,v,0}
        \end{bmatrix}}}
        =
        \underset{A_{m,0}}{\underbrace{
        \begin{bmatrix}
            0 & 0 & 0 & 0\\
            0 &  0 & 1 & 0\\
            0 &  0 & -\frac{1}{\Bar{\tau}} & \frac{1}{\Bar{\tau}}\\
            0 & 0 & 0 & -\frac{1}{h}
        \end{bmatrix}}}
        \underset{x_{v,0}}{\underbrace{
        \begin{bmatrix}
            e_{v,0}\\ 
v_{v,0}\\ 
a_{v,0}\\
u_{bl,v,0}
        \end{bmatrix}}}
        +
        \underset{B_r}{\underbrace{
        \begin{bmatrix}
             0\\
            0\\
             0\\
           \frac{1}{h}
        \end{bmatrix}}}
       u_{in}.
    \end{gather}
     \begin{figure*}[ht!]
     \vspace{-0.5cm}
    \centering
      \includegraphics[height=8cm,width=18cm]{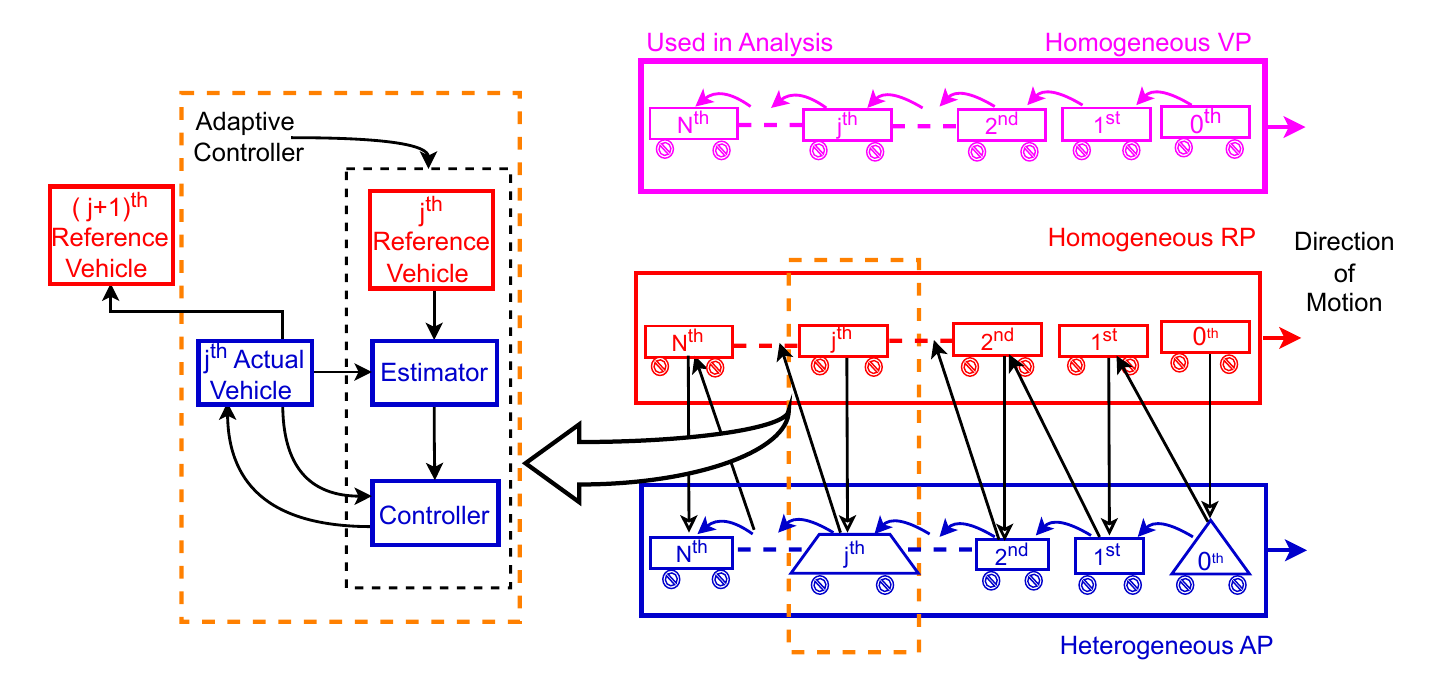}
        \caption{CACC platoon structure: actual platoon (AP), reference platoon (RP) and virtual platoon (VP).}
    \label{CACC_Platoon_Structure}
\end{figure*}
    \begin{remark}
         Since the leader has no predecessor, $e_0$ in \eqref{vehicle dynamics for zero} is a fictitious zero state. It is only introduced to ensure consistency with the other vehicle dynamics. Without loss of generality, a reduced-dimensional state can be defined as $\underline{x}_{v,0} =(v_{v,0} \quad a_{v,0} \quad u_{bl,v,0})$ which will be used subsequently to perform stability analysis for the VP, and will also be defined for the RP and the AP.
    \end{remark}
    \subsection{Stability Analysis of the Virtual Platoon}\label{String Stability Analysis}
According to Definition \ref{string stability}, string stability can be used to describe the ability to attenuate external disturbances as they propagate through the platoon due to the interconnected vehicle dynamics. To analyze the string stability of the VP, let us write the interconnected system dynamics in a compact form, resulting from \eqref{brief of new reference dynamics}-\eqref{vehicle dynamics for zero}
\begin{gather}\label{interconnected form for new reference dynamics}
\begin{split}
\underset{\dot {\bar{x}}_v}{\underbrace{
    \begin{bmatrix}
\dot{x}_{v,0}\\ 
\dot{x}_{v,1}\\ 
\vdots \\ 
\dot{x}_{v,N}
\end{bmatrix}}}
&=
\underset{\bar{A}_m}{\underbrace{
\begin{bmatrix}
A_{m,0} &  &  & 0_{4\times 4}\\ 
\bar{B}_w & A_m &  & 0_{4\times 4}\\ 
 &  \ddots & \ddots  & \\ 
0_{4\times 4}&  & \bar{B}_w & A_m 
\end{bmatrix}}}
\underset{\bar{x}_v}{\underbrace{
    \begin{bmatrix}
{x}_{v,0}\\ 
{x}_{v,1}\\ 
\vdots \\ 
{x}_{v,N}
\end{bmatrix}}}\\ &
+
\underset{\bar{B}}{\underbrace{
\begin{bmatrix}
B_r\\ 
0_{4\times 1}\\ 
\vdots \\ 
0_{4\times 1}
\end{bmatrix}}}
u_{in},
\end{split}
\end{gather}
where $\bar{B}_w = [0_{4 \times 1} \quad B_w(:,1) \quad 0_{4 \times 1} \quad B_w(:,2)] \in \mathbb{R}^{4 \times 4}$, $\bar{x}_v \in \mathbb{R}^{4(N+1) \times 1}$, $\bar{A}_m \in \mathbb{R}^{4(N+1) \times 4(N+1)}$ and $\Bar{B} \in \mathbb{R}^{4(N+1) \times 1}$.
The equilibrium point
    \begin{equation}\label{equilibrium point}
    \begin{split}
       & \bar {x}_{eq,v} =  1_{N+1} \otimes \underbrace{(0 \quad v_c \quad 0 \quad 0)^T}_{\triangleq x_{eq,v}}  \quad,\\  & \textrm{for} \quad x_{v,0}(0) =x_{eq,v} \quad \textrm{and} \quad u_{in} = 0 .
    \end{split}
    \end{equation}
of \eqref{interconnected form for new reference dynamics} has been proven to be asymptotically stable in \cite{ploeg2013lp}, where $v_c$ is any constant velocity, provided that the following Routh–Hurwitz conditions are satisfied
    \begin{equation}\label{Routh–Hurwitz conditions}
        h > 0, \quad k_p >0, \quad k_d>0,\quad k_d>\Bar{\tau} k_{p}.
    \end{equation}
    Let us define the constant equilibrium velocity in such a way that when $\lim_{t\to\infty}u_{in}(t)=0$, which implies $\lim_{t\to\infty}a_{v,0}(t)=0$, then 
\begin{equation}\label{eq:v_c_new}
    v_{c,new}=\lim_{t\to\infty} v_{v,0} (t)
\end{equation}
exists and finite.
\begin{theorem}\label{Theorem about VP spacing error}
 The vector $\bar{x}_v (t)$ of the VP defined in \eqref{interconnected form for new reference dynamics} converges to the equilibrium point $\bar {x}_{eq,v,new}=1_{N+1} \otimes (0 \quad v_{c,new} \quad 0 \quad 0)^T $, where $v_{c,new}$ is defined in \eqref{eq:v_c_new}.  
\end{theorem}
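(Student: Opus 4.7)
The plan is to exploit the block-triangular/cascade structure of $\bar{A}_m$ in \eqref{interconnected form for new reference dynamics} and argue by induction on the vehicle index $j$. The underlying analytic fact I will use repeatedly is the standard LTI asymptotic result: if $A$ is Hurwitz and the exogenous signal $u(t)$ is bounded with $u(t)\to u^{*}$, then the solution of $\dot{x}=Ax+Bu$ satisfies $x(t)\to -A^{-1}Bu^{*}$ (proved by the change of variables $\tilde x = x + A^{-1}Bu$ and standard vanishing-perturbation arguments for Hurwitz systems). Note that the Routh–Hurwitz conditions in \eqref{Routh–Hurwitz conditions}, which underpinned the stability result of \cite{ploeg2013lp}, imply that the block $A_m$ appearing in \eqref{brief of new reference dynamics} is Hurwitz; this is the only property of $A_m$ we need below.

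First I would dispose of the leader $(j=0)$. The scalar equation $h\dot{u}_{bl,v,0}=-u_{bl,v,0}+u_{in}$ is a first-order stable filter; since the hypothesis \eqref{eq:v_c_new} together with the structure of \eqref{vehicle dynamics for zero} is consistent only when $u_{in}(t)\to 0$ (otherwise $v_{v,0}$ would not have a finite limit), one obtains $u_{bl,v,0}(t)\to 0$. The acceleration equation $\dot a_{v,0}=-\tfrac{1}{\bar\tau}a_{v,0}+\tfrac{1}{\bar\tau}u_{bl,v,0}$ is again a stable first-order system with vanishing input, so $a_{v,0}(t)\to 0$. By definition $v_{v,0}(t)\to v_{c,new}$, and $e_{v,0}$ is the fictitious zero state. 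Hence the leader's non-trivial state converges to $(v_{c,new},\,0,\,0)^{T}$, matching its share of $\bar{x}_{eq,v,new}$.

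Next I would carry out the inductive step. Assume $x_{v,j-1}(t)\to (0,\,v_{c,new},\,0,\,0)^{T}$; then the driving signal $w_{v,j-1}(t)=(v_{v,j-1},\,u_{bl,v,j-1})^{T}$ converges to $w^{*}\triangleq(v_{c,new},\,0)^{T}$. Applying the LTI asymptotic result to \eqref{brief of new reference dynamics}, it remains to verify that $-A_m^{-1}B_w w^{*}=(0,\,v_{c,new},\,0,\,0)^{T}$. This is a direct algebraic check: substituting $x=(0,v_{c,new},0,0)^{T}$ and $w=w^{*}$ in $A_m x+B_w w$ annihilates all four rows (the first row gives $-v_{c,new}+v_{c,new}=0$, the second and third are immediate, and the fourth gives $-\tfrac{k_d}{h}v_{c,new}+\tfrac{k_d}{h}v_{c,new}=0$). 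Therefore $x_{v,j}(t)\to(0,v_{c,new},0,0)^{T}$, closing the induction, and iterating from $j=1$ up to $j=N$ yields $\bar x_v(t)\to\bar{x}_{eq,v,new}$.

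The delicate point, rather than any single computation, is the legitimacy of shifting the equilibrium from the zero-leader-input case of \cite{ploeg2013lp} to a non-trivial asymptotically-constant-input case. The cited stability theorem is phrased for $u_{in}=0$, whereas here we only have $u_{in}(t)\to 0$, so one must justify the translation by a cascade argument using Hurwitzness of $A_m$ rather than re-invoking the global equilibrium result for $\bar A_m$. Once the LTI vanishing-perturbation fact is granted, the induction and the algebraic verification above make the rest of the proof routine.
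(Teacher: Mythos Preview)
Your argument is correct. The induction on $j$, the use of the standard ``Hurwitz~$+$~convergent input $\Rightarrow$ convergent state'' fact, and the algebraic verification that $(0,v_{c,new},0,0)^{T}$ is the unique fixed point of $A_mx+B_w w^{*}=0$ are all sound; in particular your four-row check matches the entries of $A_m$ and $B_w$ in \eqref{new reference dynamics detail} exactly.

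The paper's own proof is essentially a one-line appeal to \cite{ploeg2013lp}: it observes that $x_{v,0}(t)$ converges (by the very definition of $v_{c,new}$) and then states that, since $\lim_{t\to\infty}u_{in}(t)=0$, the arguments of \cite{ploeg2013lp} carry over to give $\bar{x}_v(t)-\bar{x}_{eq,v,new}\to 0$. In other words, the paper treats the interconnected matrix $\bar{A}_m$ globally and relies on the cited reference plus a vanishing-input remark, without spelling out either the cascade mechanism or the equilibrium computation. Your route is genuinely more elementary and self-contained: by exploiting the lower block-triangular structure of $\bar{A}_m$ you only ever need Hurwitzness of the single block $A_m$ (guaranteed by \eqref{Routh–Hurwitz conditions}), and you sidestep the fact that $\bar{A}_m$ itself is not Hurwitz (the leader rows contribute zero eigenvalues). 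What the paper's approach buys is brevity, at the cost of deferring the actual work to \cite{ploeg2013lp}; what your approach buys is an explicit, reference-free argument that also makes transparent exactly where the hypothesis $u_{in}(t)\to 0$ enters, namely at the base case for the leader filter and nowhere else.
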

\begin{proof}
 Since $u_{in}(t)$ is such that $x_{v,0}(t)$ converges to $(0 \quad v_{c,new} \quad 0 \quad 0)^T$, 
 following the arguments of \cite{ploeg2013lp}, it can be shown that the origin of the dynamics of $(\bar{x}_v(t)-\bar{x}_{eq,v})$ is asymptotically converging to zero since $\lim_{t\to\infty} u_{in}(t)=0$.
\end{proof}
  \subsection{External Positivity and String Stability Analysis of the Virtual Platoon}\label{External Positivity Analysis}
  Let us define the pseudo spacing between vehicle $(j)$ and vehicle $(j-1)$ in the VP
\begin{equation}\label{spacing}
     \begin{split}
     \bar{s}_{v,j-1,j}(t) &= y_{v,j-1}(t) - y_{v,j}(t) - L_j- r_j\\
         &= s_{v,j-1,j}(t)-r_j.
    \end{split}
\end{equation}
Let us introduce the following assumption:
  \begin{assumption}\label{assumption for VP}
      The initial conditions satisfy
      \begin{equation}
      \begin{split}
          &\bar{s}_{v,j-1,j} (0) \geq 0, \; i.e., \; s_{v,j-1,j} (0)\geq r_j\\
         &  v_{v,j}(0) \geq 0, \; \forall j \in \alpha_N.
      \end{split}
      \end{equation}
  \end{assumption}
  This assumption guarantees that the initial condition of the platoon is physically well posed, e.g., there is no collision at $t=0$ and velocities are non-negative.\\
 The external positivity and the string stability for the VP can be proven along similar lines as the literature \cite{lunze2018adaptive,10130071}, via the following two lemmas.
\begin{lemma}\label{External positivity of inter-vehicle spacing}
    (External positivity of inter-vehicle spacing) Consider the VP dynamics \eqref{brief of new reference dynamics} and \eqref{vehicle dynamics for zero}. The linear system defined by the transfer function
    $G_{v,j}(s) = \frac{v_{v,j}(s)}{v_{v,j-1}(s)}$ is externally positive. Furthermore, the linear system defined by the transfer function $F_{v,j}(s) = \frac{\bar{s}_{v,j-1,j} (s)}{v_{v,j-1}(s)}$ is also externally positive.
\end{lemma}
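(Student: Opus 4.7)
The plan is to invoke Definition \ref{External Positivity}: it suffices to compute the impulse responses associated to $G_{v,j}(s)$ and $F_{v,j}(s)$ and verify that each is non-negative for all $t\geq 0$. The strategy rests on the observation that, in the homogeneous VP, the Ploeg-type baseline controller \eqref{baseline control dynamics} combined with the vehicle dynamics produces a large amount of pole/zero cancellation between predecessor and follower, which should reduce both transfer functions to first-order lags.

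First I would pass to the Laplace domain with zero initial conditions in \eqref{new reference dynamics detail}. Using $v_{v,j}(s)=a_{v,j}(s)/s$ and $a_{v,j}(s)=u_{bl,v,j}(s)/(\bar{\tau}s+1)$, one obtains $u_{bl,v,j}(s)=s(\bar{\tau}s+1)v_{v,j}(s)$, and an identical identity for index $j-1$ (this is where homogeneity of the VP is essential). The spacing-error equation $s\,e_{v,j}(s)=v_{v,j-1}(s)-v_{v,j}(s)-h\,a_{v,j}(s)$ gives
\begin{equation*}
e_{v,j}(s)=\frac{v_{v,j-1}(s)-(1+hs)\,v_{v,j}(s)}{s}.
\end{equation*}
Substituting all of the above into the Laplace transform of the controller dynamics $(hs+1)u_{bl,v,j}(s)=(k_p+k_d s)e_{v,j}(s)+u_{bl,v,j-1}(s)$ and multiplying through by $s$ yields
\begin{equation*}
(1+hs)\bigl[(\bar{\tau}s+1)s^2+(k_p+k_d s)\bigr]v_{v,j}(s)=\bigl[(\bar{\tau}s+1)s^2+(k_p+k_d s)\bigr]v_{v,j-1}(s).
\end{equation*}
The bracketed polynomial cancels on both sides, leaving the clean form $G_{v,j}(s)=1/(1+hs)$. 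Taking the inverse Laplace transform yields the impulse response $g_{v,j}(t)=\tfrac{1}{h}e^{-t/h}$, which is non-negative for every $t\geq 0$, proving the first claim.

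For the second claim, I would rewrite $\bar{s}_{v,j-1,j}=e_{v,j}+h v_{v,j}$ in the Laplace domain and use the expression for $e_{v,j}(s)$ above to get
\begin{equation*}
\bar{s}_{v,j-1,j}(s)=\frac{v_{v,j-1}(s)-v_{v,j}(s)}{s},
\end{equation*}
so that $F_{v,j}(s)=\bigl(1-G_{v,j}(s)\bigr)/s=h/(1+hs)$, whose impulse response $e^{-t/h}$ is likewise non-negative, giving external positivity of $F_{v,j}$.

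The main obstacle I expect is purely algebraic: recognising the common factor $(\bar{\tau}s+1)s^2+(k_p+k_d s)$ that allows the cancellation. Without this observation one would be left with a fourth-order transfer function whose impulse-response sign is not obvious; the Routh--Hurwitz conditions \eqref{Routh–Hurwitz conditions} would then have to be invoked to characterise the poles. The cancellation sidesteps this entirely and shows that external positivity here is an intrinsic structural property of the homogeneous CTH design, independent of the tuning of $k_p,k_d,\bar{\tau}$.
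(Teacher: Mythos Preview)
Your proposal is correct and follows essentially the same route as the paper: compute $G_{v,j}(s)=1/(1+hs)$ via the pole--zero cancellation of $\bar{\tau}s^3+s^2+k_d s+k_p$, derive $F_{v,j}(s)=h/(1+hs)$ from the spacing relation, and check that both impulse responses are non-negative. Your derivation is in fact more explicit than the paper's, which simply states the pre-cancellation form of $G_{v,j}$ and obtains $F_{v,j}$ directly from $\dot{\bar{s}}_{v,j-1,j}=v_{v,j-1}-v_{v,j}$ rather than via $\bar{s}_{v,j-1,j}=e_{v,j}+hv_{v,j}$; the two computations are equivalent.
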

\begin{proof}
From the VP dynamics \eqref{brief of new reference dynamics}, $G_{v,j}(s)$ is calculated as \cite{abou2017adaptive}
\begin{equation}\label{controlled system transfer function for velocity output}
    G_{v,j}(s)=\frac{\bar{\tau} s^3 + s^2 + k_d s + k_p}{(hs+1)(\bar{\tau} s^3 + s^2 + k_d s + k_p)}=\frac{1}{hs+1}.
\end{equation}
The time derivative of $\bar{s}_{v,j-1,j}(t)$ in \eqref{spacing} gives
\begin{equation}\label{distance derivative dynamics}
    \dot{\bar{s}}_{v,j-1,j}(t) = v_{v,j-1}(t)-v_{v,j}(t).
\end{equation}
From \eqref{controlled system transfer function for velocity output} and \eqref{distance derivative dynamics}, the transfer function $F_{v,j}(s)$ is calculated as
\begin{equation}\label{controlled system transfer function for spacing output}
    F_{v,j}(s)=\frac{h}{hs+1}.
\end{equation}
The impulse response of \eqref{controlled system transfer function for velocity output} is $g_{v,j}(t) = h^{-1}e^{-h^{-1}t}, \forall t\geq0$. We have $g_{v,j}(t) \geq 0, \forall t\geq0$, which satisfies Definition \ref{External Positivity}. Furthermore, the impulse response of \eqref{controlled system transfer function for spacing output} is $f_{v,j}(t) = e^{-h^{-1}t}, \forall t\geq0$. We have $f_{v,j}(t) \geq 0, \forall t\geq0$, which also satisfies Definition \ref{External Positivity}.\\
    \end{proof}
\begin{remark} \label{Collision avoidance of reference platoon}
From Definition \ref{External Positivity} and Lemma \ref{External positivity of inter-vehicle spacing}, it is concluded that a positive input $v_{v,j-1}(t)$ would ensure that both the outputs $v_{v,j}(t)$ and $\bar{s}_{v,j-1,j}(t)$, $\forall t\geq 0$ remain positive, so that the collision avoidance of VP is ensured under Assumption \ref{assumption for VP}.
\end{remark}
\begin{remark}\label{remark for assumption of initial condition}
      The goal of this work is to guarantee collision avoidance in the AP, which is different from the state-of-the-art approach of guaranteeing collision avoidance in the RP \cite{10130071}. Such a goal requires a different analysis and the introduction of $\bar{s}_{v,j-1,j}(t)$ instead of $s_{v,j-1,j}(t)$, as it will become evident in Section \ref{Collision Avoidance}.
  \end{remark}
\begin{lemma}\label{lemma for string stability}
    The VP in \eqref{brief of new reference dynamics} is string stable.
\end{lemma}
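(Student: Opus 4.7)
The plan is to invoke Definition \ref{string stability} directly, using the transfer function already computed in Lemma \ref{External positivity of inter-vehicle spacing}. The natural input/output pair for string stability in a CACC platoon with velocity propagation is $\Upsilon_j = v_{v,j}$, so the relevant transfer function from predecessor to follower is exactly $G_{v,j}(s) = v_{v,j}(s)/v_{v,j-1}(s)$, which was shown in \eqref{controlled system transfer function for velocity output} to simplify to the first-order form $1/(hs+1)$ after pole-zero cancellation (this cancellation is possible precisely because the VP is homogeneous, which is the whole point of introducing it).

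Next, I would evaluate the magnitude of this transfer function on the imaginary axis. Substituting $s = \jmath\omega$ gives
\begin{equation*}
|G_{v,j}(\jmath\omega)| = \frac{1}{\sqrt{1 + h^2\omega^2}}.
\end{equation*}
Because $h > 0$ by the standing assumption in \eqref{Routh–Hurwitz conditions}, the denominator is at least $1$ for every $\omega \in \mathbb{R}$, so $|G_{v,j}(\jmath\omega)| \leq 1$ with the supremum attained at $\omega = 0$. This inequality is independent of the vehicle index $j$ because every interconnection in the VP shares the same homogeneous parameters $\bar\tau$ and $\Lambda=1$, so the same first-order transfer function governs every predecessor-to-follower pair. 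Hence $\sup_{\omega}|G_{v,j}(\jmath\omega)| \leq 1$ for all $j \in \alpha_N$, which is exactly the condition \eqref{String Stability eq} of Definition \ref{string stability}.

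I do not anticipate any real obstacle: the only non-trivial ingredient, namely the cancellation that reduces the nominal closed-loop transfer function to $1/(hs+1)$, has already been carried out in Lemma \ref{External positivity of inter-vehicle spacing} and relies on homogeneity of the VP together with the Routh--Hurwitz conditions \eqref{Routh–Hurwitz conditions} which ensure that the cancelled factor $\bar\tau s^3 + s^2 + k_d s + k_p$ is Hurwitz (so the cancellation is stable and the reduced transfer function faithfully represents the input-output behavior with zero initial conditions). The only bookkeeping point worth flagging in the write-up is that the cascaded structure in Definition \ref{string stability} is matched by taking $\Upsilon_j = v_{v,j}$ and treating $v_{v,0}$ (driven by $u_{in}$ through the leader dynamics \eqref{vehicle dynamics for zero}) as the external input $\Upsilon_0$, after which the argument above applies uniformly to every $j$.
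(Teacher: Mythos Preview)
Your proposal is correct and follows essentially the same route as the paper's own proof: both use the transfer function $G_{v,j}(s)=1/(hs+1)$ from \eqref{controlled system transfer function for velocity output}, compute $|G_{v,j}(\jmath\omega)| = 1/\sqrt{1+h^2\omega^2} \leq 1$, and conclude via Definition~\ref{string stability}. Your write-up is slightly more explicit about the stable pole-zero cancellation and the matching of the cascaded structure, but the argument is identical in substance.
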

\begin{proof}
    String stability of the linear system defined by the transfer function $G_{v,j}(s) = \frac{a_{v,j}(s)}{a_{v,j-1}(s)}$ (equivalently, $G_{v,j}(s) = \frac{v_{v,j}(s)}{v_{v,j-1}(s)}$) can be analyzed from \eqref{String Stability eq} and \eqref{controlled system transfer function for velocity output},
     and the following is concluded
    \begin{equation}\label{string stability for w}
        |G_{v,j}(\jmath  \omega)| = \frac{1}{\sqrt{h^2 \omega^2 +1}}
    \end{equation}
    implying, $|G_{v,j}(\jmath  \omega)| \leq 1,\; \forall \omega, \; \forall j\; \in \:  \alpha_N \: \setminus \: \left \{ 0\right \}$.  
Therefore, from Definition \ref{string stability}, the VP in \eqref{brief of new reference dynamics} is string stable.
\end{proof}
The above properties of VP are going to be useful in the analysis of subsequently designed MRAC architecture. 
\section{MRAC Augmented CACC Platooning}\label{Adaptive CACC Augmentation}
In order to address heterogeneity in the platoon, the adaptive controller in \eqref{augmented control input} is now designed. First, we design homogeneous RP dynamics. Differently from the VP dynamics that do not interact with the AP, the RP dynamics are part of the adaptive control design and thus interact with the AP. 
\subsection{Reference Dynamics}\label{Reference Dynamics}
 The homogeneous RP dynamics  is designed as
\begin{gather}\label{reference dynamics detail}
    \begin{split}
    \underset{\dot x_{r,j}}{\underbrace{
\begin{bmatrix}
\dot{e}_{r,j}\\ 
\dot{v}_{r,j}\\ 
\dot{a}_{r,j}\\ 
\dot{u}_{bl,r,j}
\end{bmatrix}}}
&=
\underset{A_m}{\underbrace{
\begin{bmatrix}
 0 & -1 & -h & 0\\
            0 &  0 & 1 & 0\\
            0 &  0 & -\frac{1}{\Bar{\tau}} & \frac{1}{\Bar{\tau}}\\
            \frac{k_p}{h} & -\frac{k_d}{h} & -k_d & -\frac{1}{h}
\end{bmatrix}}}
\underset{ x_{r,j}}{\underbrace{
\begin{bmatrix}
e_{r,j}\\ 
v_{r,j}\\ 
a_{r,j}\\ 
u_{bl,r,j}
\end{bmatrix}}}\\
&+
\underset{B_w}{\underbrace{
\begin{bmatrix}
            1& 0\\
            0 & 0\\
            0 & 0\\
            \frac{k_d}{h} & \frac{1}{h}
\end{bmatrix}}}
\underset{w_{j-1}}{\underbrace{
\begin{bmatrix}
            v_{i-1}\\
            u_{bl,i-1}
\end{bmatrix}}},\quad \forall j \in \alpha_N,
\end{split}
\end{gather}  
subject to the initial condition
    \begin{equation}\label{assumption for RP}
        x_{r,j}(0) = x_{v,j} (0), \; \forall j \in \alpha_N
    \end{equation}
where subscript $r$ denotes the RP, $x_{r,j}(t) \in \mathbb{R}^4$ and $w_{j-1}(t) \in \mathbb{R}^2$ are the RP vector and the AP input vector, respectively. The dynamics of the leader of the RP can be written as follows

 \begin{gather}\label{vehicle dynamics for zero for RP}
    \underset{\dot {\underline{x}}_{r,0}}{\underbrace{
        \begin{bmatrix}
\dot v_{r,0}\\ 
\dot a_{r,0}\\ 
\dot u_{bl,r,0}
        \end{bmatrix}}}
        =
        \underset{\underline{A}_{m,0}}{\underbrace{
        \begin{bmatrix}
              0 & 1 & 0\\
               0 & -\frac{1}{\Bar{\tau}} & \frac{1}{\Bar{\tau}}\\
              0 & 0 & -\frac{1}{h}
        \end{bmatrix}}}
        \underset{\underline{x}_{r,0}}{\underbrace{
        \begin{bmatrix}     
v_{r,0}\\ 
a_{r,0}\\
u_{bl,r,0}
        \end{bmatrix}}}
        +
        \underset{\underline{B}_r}{\underbrace{
        \begin{bmatrix} 
            0\\
             0\\
           \frac{1}{h}
        \end{bmatrix}}}
       u_{in}.
    \end{gather}
\begin{remark}
    Due to the coupling between the AP and the RP, it is not straightforward to derive string stability and external positivity for the whole RP in the sense of Definitions \ref{External Positivity} and \ref{string stability}. On the other hand, such properties can be directly shown for VP (cf. Lemma \ref{External positivity of inter-vehicle spacing} and Lemma \ref{lemma for string stability}). In subsequent analysis, it is shown that the RP and the AP ultimately converge to the VP. Unlike existing analysis tools that rely on properties defined for pairs of adjacent predecessor-follower vehicles \cite{10130071}, the analysis we present does not rely on a unidirectional predecessor-follower interconnection and is potentially applicable to a broader class of platoon interconnections \cite{10387736}.
\end{remark}
In order to write the interconnected form for the RP dynamics in \eqref{reference dynamics detail}, after some manipulation, the following dynamics is achieved
\begin{equation}\label{manipulated brief reference dynamics}
       \dot x_{r,j} = A_m x_{r,j} + B_w w_{r,j-1} + B_w \tilde{w}_{j-1}, \quad \forall j\; \in \:  \alpha_N,
    \end{equation}
    where $\tilde{w}_{j-1} (t)= w_{j-1}(t) - w_{r,j-1}(t)$.
    Therefore, the interconnected form for the dynamics in \eqref{reference dynamics detail} is as follows
\begin{gather}\label{interconnected form for reference dynamics}
    \begin{split}
\underset{ \dot {\bar{x}}_r}{\underbrace{
    \begin{bmatrix}
\dot{x}_{r,0}\\ 
\dot{x}_{r,1}\\ 
\vdots \\ 
\dot{x}_{r,N}
\end{bmatrix}}}
&\hspace{-0.05cm}=\hspace{-0.05cm}
\underset{\bar{A}_m}{\underbrace{
\begin{bmatrix}
A_{m,0} &  &  & 0_{4\times 4}\\ 
\bar{B}_w & A_m &  & 0_{4\times 4}\\ 
 &  \ddots & \ddots  & \\ 
0_{4\times 4}&  & \bar{B}_w & A_m 
\end{bmatrix}}}
\underset{\bar{x}_r}{\underbrace{
    \begin{bmatrix}
{x}_{r,0}\\ 
{x}_{r,1}\\ 
\vdots \\ 
{x}_{r,N}
\end{bmatrix}}}
\\
&+
\underset{\bar{B}}{\underbrace{
\begin{bmatrix}
B_r\\ 
0_{4\times 1}\\ 
\vdots \\ 
0_{4\times 1}
\end{bmatrix}}}
u_{in}\hspace{-0.05cm}+\hspace{-0.05cm}
\underset{{B_{\tilde {\bar {w}}}}}{\underbrace{
\begin{bmatrix}
0_{4\times 2}   &  & 0_{4\times 2}\\ 
 B_w &  & 0_{4\times 2}\\ 
   \ddots & \ddots & \\ 
0_{4\times 2}  & 0_{4\times 2} & B_w
\end{bmatrix}}}
\underset{{\tilde {\bar {w}}}}{\underbrace{
\begin{bmatrix}
\tilde{w}_0\\ 
\tilde{w}_1\\ 
\vdots \\ 
\tilde{w}_{N-1}
\end{bmatrix}}},
\end{split}
\end{gather}
where $x_{r,0}(t)=x_{v,0}(t) \; \forall t\geq 0$, $\bar{x}_r (t)\in \mathbb{R}^{4(N+1) \times 1}$, $\tilde {\bar {w}}(t) \in \mathbb{R}^{2N \times 1}$ and $B_{\tilde {\bar {w}}} \in \mathbb{R}^{4(N+1) \times 2N}$.
\subsection{MRAC Design}\label{MRAC}
 Substituting \eqref{augmented control input} into \eqref{platoon dynamics} results in
\begin{gather}\label{Uncertain vehicle model detail}
\begin{split}
\underset{\dot x_j}{\underbrace{
    \begin{bmatrix}
\dot{e}_j\\ 
\dot{v}_j\\ 
\dot{a}_j\\ 
\dot{u}_{bl,j}
\end{bmatrix}}}
&=
\underset{A_j}{\underbrace{
\begin{bmatrix}
0 & -1 & -h & 0\\ 
0 & 0 & 1 & 0\\ 
0 & 0 & -\frac{1}{\tau_j} & \frac{\Lambda_j}{\tau_j}\\ 
\frac{k_p}{h} & -\frac{k_d}{h} & -k_d & -\frac{1}{h}
\end{bmatrix}}}
\underset{x_j}{\underbrace{
\begin{bmatrix}
e_j\\ 
v_j\\ 
a_j\\ 
u_{bl,j}
\end{bmatrix}}}\\
&+
\underset{B_w}{\underbrace{
\begin{bmatrix}
            1& 0\\
            0 & 0\\
            0 & 0\\
            \frac{k_d}{h} & \frac{1}{h}
\end{bmatrix}}}
\underset{w_{j-1}}{\underbrace{
\begin{bmatrix}
v_{j-1}\\
            u_{bl,j-1}
\end{bmatrix}}}
+
\underset{B_u}{\underbrace{
\begin{bmatrix}
0\\ 
0\\ 
1\\ 
0
\end{bmatrix}}}
\underset{\theta_j}{\underbrace{
\left (\frac{\Lambda _j}{\tau _j}  \right )}}u_{ad,j},
\end{split}
 \end{gather}
which can be compactly written as
     \begin{equation}\label{Uncertain vehicle model} 
    \dot x_j = A_jx_j + B_w w_{j-1} + B_u \theta_j u_{ad,j}, \quad \forall j \in \alpha_N ,
\end{equation}
where $x_{j}(t)\in \mathbb{R}^{4}$, $A_j\in \mathbb{R}^{4\times 4}$, $B_{u}\in \mathbb{R}^{4}$ and $\theta_{j}\in \mathbb{R}$.

The following ideal controller gain would make the AP dynamics in \eqref{Uncertain vehicle model} match the RP dynamics in \eqref{reference dynamics detail}.
\begin{lemma}\label{Lemma for Kxj}
    There exist a constant vector $k_{x,j} \in \mathbb {R}^4$ such that
\begin{equation}\label{matching condition}
\begin{split}
     &A_m = A_j + B_u \theta_j k_{x,j}^T \quad \forall j \in \alpha_N\\
\end{split}
\end{equation}
\end{lemma}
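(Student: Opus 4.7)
The plan is to recognize that \eqref{matching condition} is the standard MRAC matching condition, and to verify it by direct element-wise comparison of $A_m$ and $A_j$, exploiting the very special structure of $B_u$. Specifically, since $B_u = (0,0,1,0)^T$, the outer product $B_u \theta_j k_{x,j}^T$ is a $4\times4$ matrix whose only nonzero row is the third one, which equals $\theta_j k_{x,j}^T$. Consequently, \eqref{matching condition} can hold only if $A_m$ and $A_j$ agree on rows 1, 2, and 4, and this is exactly what one reads off from the definitions of $A_m$ in \eqref{reference dynamics detail} and $A_j$ in \eqref{Uncertain vehicle model detail}.

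With this structural observation in place, what remains is a single row equation $A_m(3,:) = A_j(3,:) + \theta_j k_{x,j}^T$. I would solve this component by component. Columns 1 and 2 of $A_m(3,:)$ and $A_j(3,:)$ are both zero, forcing $k_{x,j,1} = k_{x,j,2} = 0$. Column 3 gives $-1/\bar{\tau} = -1/\tau_j + \theta_j k_{x,j,3}$, yielding
\begin{equation*}
k_{x,j,3} = \frac{1}{\theta_j}\left(\frac{1}{\tau_j} - \frac{1}{\bar{\tau}}\right) = \frac{1}{\Lambda_j} - \frac{\tau_j}{\Lambda_j \bar{\tau}}.
\end{equation*}
Column 4 gives $1/\bar{\tau} = \Lambda_j/\tau_j + \theta_j k_{x,j,4}$, hence $k_{x,j,4} = \tau_j/(\Lambda_j \bar{\tau}) - 1$. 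These four components define the desired constant vector $k_{x,j}$.

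For well-posedness, I would point out that the division by $\theta_j = \Lambda_j/\tau_j$ is legitimate because $\Lambda_j>0$ and $\tau_j>0$ by the standing assumptions in \eqref{platoon dynamics}, so $\theta_j>0$. In particular, $k_{x,j}$ exists for every $j \in \alpha_N$, though of course its numerical value depends on the unknown parameters $\Lambda_j,\tau_j$, which motivates the adaptive estimation of $k_{x,j}$ in the subsequent MRAC design.

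I do not expect any genuine obstacle here: the lemma is purely algebraic, and the only mildly delicate point is the recognition that the rank-one structure of $B_u \theta_j k_{x,j}^T$ makes the matching condition reducible to a single scalar equation per column of the third row. The nontrivial use of this lemma occurs later, when $k_{x,j}$ is estimated online and the matching condition is invoked to rewrite the error dynamics in a form amenable to Lyapunov analysis.
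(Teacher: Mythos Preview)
Your proposal is correct and follows essentially the same approach as the paper: the paper simply exhibits the explicit vector $k_{x,j}^T = [\,0 \;\; 0 \;\; (\bar{\tau}-\tau_j)/(\Lambda_j\bar{\tau}) \;\; (\tau_j-\Lambda_j\bar{\tau})/(\Lambda_j\bar{\tau})\,]$ and asserts that it satisfies \eqref{matching condition}, while you additionally spell out the row-structure argument that derives it; your components $k_{x,j,3}$ and $k_{x,j,4}$ simplify exactly to the paper's expressions.
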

\begin{proof}
    It can be shown that the following expression of $k_{x,j}^T$ satisfies \eqref{matching condition}
    \begin{equation*}
        k_{x,j}^T = \left [ 0 \quad 0 \quad \frac{(\Bar{\tau} - \tau_j)}{\Lambda_j \Bar{\tau}} \quad \frac{\tau_j - \Lambda_j \Bar{\tau}}{\Lambda_j \Bar{\tau}} \right ], \quad \forall j \in \alpha_N.
    \end{equation*}
\end{proof} 
It should be noted that the ideal controller gain $k_{x,j}$ in Lemma \ref{Lemma for Kxj} requires knowledge of $\Lambda_j$ and $\tau_j$, which are unknown and therefore, $u_{ad,j}(t)$ is defined as follows
\begin{equation}\label{adaptive control input}
    u_{ad,j}(t) = \hat{k}_{x,j}^T(t) x_j(t),  \quad \forall j \in \alpha_N,
\end{equation}
where $\hat{k}_{x,j}(t) \in \mathbb{R}^{4}$ is the online estimate of $k_{x,j}$. 
Define the tracking error as
\begin{equation}\label{tracking error}
    \Tilde{x}_j(t) = x_j(t) - x_{r,j}(t),  \quad \forall j \in \alpha_N.
\end{equation}
Define the parameter estimation error as
\begin{equation}\label{parameter estimation error}
    \Tilde{k}_{x,j}(t) = \hat{k}_{x,j}(t) -k_{x,j},  \quad \forall j \in \alpha_N.
\end{equation}
Using \eqref{Uncertain vehicle model}, \eqref{matching condition}, \eqref{adaptive control input}, \eqref{tracking error} and \eqref{parameter estimation error}, the tracking error dynamics is
\begin{equation}\label{state tracking error dynamics}
    \dot {\Tilde{x}}_j = A_m \Tilde{x}_j + B_u \theta_j \Tilde{k}_{x,j}^T x_j,  \quad \forall j \in \alpha_N.
\end{equation}
Let the parameter update law for the gain in \eqref{adaptive control input} be designed as
\begin{equation}\label{control law}
\dot {\hat{k}}_{x,j} = - \Gamma_{x,j} \psi^{'} (\left \| \Tilde{x}_j \right \|_{P_m}) x_j \Tilde{x}_j^T P_m B_u, \quad \forall j \in \alpha_N,
\end{equation}
where $P_m$ is a positive definite matrix satisfying
\begin{equation}\label{PD equation}
     A_m^T P_m + P_m A_m=-Q_m, 
\end{equation}
for any design matrix $Q_m = Q_m^T >0$.
 From \eqref{control law}, the parameter estimation error dynamics is as follows
\begin{equation} \label{ control law adaptation, no composite adaptation}
    \dot {\Tilde{k}}_{x,j} = - \Gamma_{x,j} \psi^{'} (\left \| \Tilde{x}_j \right \|_{P_m}) x_j \Tilde{x}_j^T P_m B_u, \quad \forall j \in \alpha_N.
\end{equation}
\begin{theorem}\label{lyapunov theorem}
     For the dynamics of AP in \eqref{Uncertain vehicle model} and the dynamics of RP in \eqref{reference dynamics detail}, the control input \eqref{adaptive control input} and the adaptive update law \eqref{control law} ensure the following:
     \begin{enumerate}
         \item The origin of the error $(\tilde{x}_j(t),\tilde{k}_{x,j}(t))$ dynamics in \eqref{state tracking error dynamics} and \eqref{ control law adaptation, no composite adaptation}, is Lyapunov stable $\forall j \in \alpha_N \cup \left\{0 \right\}$.
         \item The state tracking error is guaranteed to satisfy
     \begin{equation}\label{error bound}
         \left \| \tilde{x}_j (t) \right \|_{P_m} < c,\quad \forall t \geq 0,\quad \forall j \in \alpha_N.
     \end{equation}
     where $c$ is defined in \eqref{blf candidate equation}.
     \end{enumerate}
\end{theorem}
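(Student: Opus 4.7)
My plan is a set-theoretic MRAC Lyapunov argument built around the generalized barrier Lyapunov function of Definition~\ref{blf}. For each $j \in \alpha_N$, I would introduce the composite candidate
\begin{equation*}
V_j(\tilde{x}_j, \tilde{k}_{x,j}) = \psi(\|\tilde{x}_j\|_{P_m}) + \theta_j\, \tilde{k}_{x,j}^T \Gamma_{x,j}^{-1} \tilde{k}_{x,j},
\end{equation*}
which is positive definite in $(\tilde{x}_j,\tilde{k}_{x,j})$ on the admissible region $\{\|\tilde{x}_j\|_{P_m} < c\}$ because $\theta_j = \Lambda_j/\tau_j > 0$ and $\Gamma_{x,j} = \Gamma_{x,j}^T > 0$. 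The weight $\theta_j$ on the parameter-error term is picked precisely so that the cross term in the time derivative cancels.

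Differentiating along \eqref{state tracking error dynamics} and invoking the chain rule with \eqref{psi derivative wrt nx} together with the Lyapunov identity \eqref{PD equation}, the first summand contributes
\begin{equation*}
\psi'(\|\tilde{x}_j\|_{P_m})\left(-\tilde{x}_j^T Q_m \tilde{x}_j + 2\theta_j\, (\tilde{x}_j^T P_m B_u)(\tilde{k}_{x,j}^T x_j)\right),
\end{equation*}
while substituting \eqref{ control law adaptation, no composite adaptation} into $2\theta_j\, \tilde{k}_{x,j}^T \Gamma_{x,j}^{-1} \dot{\tilde{k}}_{x,j}$ produces exactly the negative of this sign-indefinite bilinear term. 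After cancellation,
\begin{equation*}
\dot{V}_j = -\psi'(\|\tilde{x}_j\|_{P_m})\, \tilde{x}_j^T Q_m \tilde{x}_j \leq 0,
\end{equation*}
since $\psi' > 0$ on $\Omega_c$ by property (v) of Definition~\ref{blf} and $Q_m > 0$ by construction.

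I would then exploit the barrier structure of $\psi$ to obtain both claims. Under the standing set-theoretic initial-condition hypothesis $\|\tilde{x}_j(0)\|_{P_m} < c$, which is consistent with the RP initialization \eqref{assumption for RP}, $V_j(0)$ is finite, and monotonicity gives $V_j(t) \leq V_j(0)$ on the maximal interval of existence. Because $\psi(\|\tilde{x}_j\|_{P_m}) \to \infty$ as $\|\tilde{x}_j\|_{P_m} \to c$, any trajectory attempting to reach $\|\tilde{x}_j\|_{P_m} = c$ would force $V_j \to \infty$, contradicting $V_j \leq V_j(0)$. By a standard forward-invariance argument the sublevel set is positively invariant, which simultaneously delivers the explicit bound \eqref{error bound}, boundedness of $\tilde{k}_{x,j}$, and Lyapunov stability of the origin of \eqref{state tracking error dynamics}--\eqref{ control law adaptation, no composite adaptation}. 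The leader case $j=0$ follows by the same template with the reduced state $\underline{x}_{r,0}$ in place of $x_{r,j}$.

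The main obstacle I anticipate is the cross-term cancellation: it requires the factor $\psi'(\|\tilde{x}_j\|_{P_m})$ to appear in the adaptive law \eqref{control law} with exactly the right multiplier, because otherwise a residual indefinite bilinear term in $\tilde{x}_j$ and $\tilde{k}_{x,j}$ would survive and spoil the sign conclusion on $\dot V_j$. This is precisely why the update law is modulated by $\psi'$ rather than taken as the usual quadratic-Lyapunov gradient, and verifying that the resulting modulated law is well-defined (i.e.\ $\psi'$ is finite) inside $\Omega_c$ is where the argument ties back to the barrier design of $\psi$.
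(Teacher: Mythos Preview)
Your proposal is correct and follows essentially the same route as the paper: the same composite Lyapunov candidate $\psi(\|\tilde{x}_j\|_{P_m}) + \theta_j\,\tilde{k}_{x,j}^T\Gamma_{x,j}^{-1}\tilde{k}_{x,j}$ (up to an inconsequential factor of $\tfrac12$), the same cross-term cancellation via \eqref{control law}, and the same barrier argument that $V_j(t)\le V_j(0)<\infty$ forces $\|\tilde{x}_j\|_{P_m}<c$. Your explicit mention of the initial-condition hypothesis $\|\tilde{x}_j(0)\|_{P_m}<c$ and the forward-invariance reasoning is slightly more careful than the paper's presentation, and your remark that the leader case $j=0$ requires the reduced-state variant is also in line with how the paper handles it separately in Theorem~\ref{lyapunov theorem leader}.
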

\begin{proof}
    $1)$ Let us define the following radially unbounded Lyapunov candidate
\begin{equation}\label{Lyapunov candidate adaptive}
\begin{split}
    V_j(\tilde{x}_j (t),\tilde{k}_{x,j} (t)) &= \frac{1}{2} \psi (\left \| \tilde{x}_j \right \|_{P_m})\\
    & + \frac{\theta_j}{2}\tilde{k}_{x,j}^T (t) \Gamma_{x,j}^{-1}\tilde{k}_{x,j} (t), \quad \forall j \in \alpha_N.
\end{split}
\end{equation}
Taking the time derivative of \eqref{Lyapunov candidate adaptive}, substituting \eqref{state tracking error dynamics} in it, and using \eqref{PD equation}, the following is obtained
\begin{equation}\label{time derivative of Lyapunov candidate adaptive}
\begin{split}
    \dot V_j(\tilde{x}_j (t),\tilde{k}_{x,j}(t)) =& -\frac{1}{2}\psi^{'} (\left \| \Tilde{x}_j \right \|_{P_m}) \tilde{x}_j^T Q_m \tilde{x}_j\\  &+\theta_j \psi^{'} (\left \| \Tilde{x}_j \right \|_{P_m}) x_{j}^{T}\tilde{k}_{x,j}B_u^T P_m \tilde{x}_{j} \\
    &+\theta_j \dot{\hat {k}}_{x,j}^T \Gamma_{x,j}^{-1}\tilde{k}_{x,j}.
    \end{split}
\end{equation}
Considering the adaptive law as in \eqref{control law}, $\dot V_j(\tilde{x}_j(t),\tilde{k}_{x,j}(t))$ in \eqref{time derivative of Lyapunov candidate adaptive} reduces to
\begin{equation}\label{NSD time derivative adaptive}
    \dot V_j(\tilde{x}_j(t),\tilde{k}_{x,j}(t)) \leq -\frac{1}{2}\psi^{'} (\left \| \Tilde{x}_j \right \|_{P_m}) \tilde{x}_j^T Q_m \tilde{x}_j \leq 0.
\end{equation}
In \eqref{NSD time derivative adaptive}, $\dot V_j(\tilde{x}_j(t),\tilde{k}_{x,j}(t))$ is negative semi definite, therefore the origin of $(\tilde{x}_j(t)(t),\tilde{k}_{x,j}(t))$ is Lyapunov stable. 
$2)$ From the fact that $\dot V_j(\tilde{x}_j (t),\tilde{k}_{x,j} (t)) \leq 0$, we have $V_j(\tilde{x}_j (t),\tilde{k}_{x,j} (t)) \leq V_j(\tilde{x}_j (0),\tilde{k}_{x,j} (0)) < \infty$. Hence, $( \psi (\left \| \tilde{x}_j(t) \right \|_{P_m}),\tilde{k}_{x,j}(t)) \in \mathcal{L}_{\infty}$ from \eqref{Lyapunov candidate adaptive}, and consequently, $\left \| \tilde{x}_j \right \|_{P_m} < c, \forall t \geq 0$ from the Definition \ref{blf}, which implies $\tilde{x}_j (t) \in \mathcal{L}_\infty $.  
\end{proof}
\subsection{MRAC Design for Leader}\label{MRAC for Leader}
From \eqref{augmented control input} and \eqref{leader control input}, the dynamics of the leader of the AP can be written as follows
\begin{gather}\label{vehicle dynamics for zero for AP}
\begin{split}
    \underset{\dot {\underline{x}}_{0}}{\underbrace{
        \begin{bmatrix}
\dot v_{0}\\ 
\dot a_{0}\\ 
\dot u_{bl,0}
        \end{bmatrix}}}
        &=
        \underset{\underline{A}_0}{\underbrace{
        \begin{bmatrix}
             0 & 1 & 0\\
             0 & -\frac{1}{\tau_0} & \frac{\Lambda_{0}}{\tau_0}\\
             0 & 0 & -\frac{1}{h}
        \end{bmatrix}}}
        \underset{\underline{x}_{0}}{\underbrace{
        \begin{bmatrix}
v_{0}\\ 
a_{0}\\
u_{bl,0}
        \end{bmatrix}}}
        \\&+
\underset{\underline{B}_u}{\underbrace{
\begin{bmatrix}
0\\ 
1\\ 
0
\end{bmatrix}}}
\underset{\theta_0}{\underbrace{
\left (\frac{\Lambda _0}{\tau _0}  \right )}}u_{ad,0}
        +
        \underset{\underline{B}_r}{\underbrace{
        \begin{bmatrix}
            0\\
             0\\
           \frac{1}{h}
        \end{bmatrix}}}
       u_{in}
       \end{split}
    \end{gather}
    Let us assume a matrix $\underline{A}_c$ as
    \begin{gather}
        \underline{A}_c = \begin{bmatrix}
0 & 1 & 0 \\
\underline{a}_{c, \tilde{v}} & \underline{a}_{c,\tilde{a}} & \underline{a}_{c,\tilde{u}_{bl}} \\
0 & 0 & -\frac{1}{h} \\
\end{bmatrix}
    \end{gather}
    where $\underline{A}_c$ is Hurwitz for the following Routh-Hurwitz condition:
    \begin{equation}
        \underline{a}_{c, \tilde{v}} <0, \; \underline{a}_{c,\tilde{a}} <0, \; \underline{a}_{c,\tilde{u}_{bl}} \in \mathbb{R}.
    \end{equation}
    The following ideal controller gain would make the AP dynamics in \eqref{vehicle dynamics for zero for AP} match the RP dynamics in \eqref{vehicle dynamics for zero for RP}.
\begin{lemma}
    There exist constant vectors $k_{\underline{x}_{0}} \in \mathbb {R}^3$ and $k_{ \tilde{\underline{x}}_{0}} \in \mathbb {R}^3$ such that
\begin{equation}\label{matching condition of leader AP}
\begin{split}
     & \underline{A}_{m,0} = \underline{A}_0 + \underline{B}_u \theta_0 k_{\underline{x}_{0}}^T,\\
     & \underline{A}_c = \underline{A}_{m,0} + \underline{B}_u \theta_0 k_{ \tilde{\underline{x}}_{0}}^T.
\end{split}
\end{equation}
\end{lemma}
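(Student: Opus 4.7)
The plan is to exploit the structure $\underline{B}_u = [0,\; 1,\; 0]^T$: any rank-one matrix of the form $\underline{B}_u \theta_0 k^T$ has identically zero first and third rows. Consequently, each of the two matching identities in \eqref{matching condition of leader AP} splits into (i) a check that the first and third rows of the two matrices being matched already agree, and (ii) a single linear equation coming from the second rows that determines $\theta_0 k^T$, hence $k$. Well-posedness is guaranteed by the physical positivity $\Lambda_0, \tau_0 > 0$, which gives $\theta_0 = \Lambda_0/\tau_0 > 0$ and therefore invertible.

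For the first identity $\underline{A}_{m,0} = \underline{A}_0 + \underline{B}_u \theta_0 k_{\underline{x}_0}^T$, I would inspect \eqref{vehicle dynamics for zero for RP} and \eqref{vehicle dynamics for zero for AP} and note that the first rows are both $[0,\;1,\;0]$ and the third rows are both $[0,\;0,\;-1/h]$. Equating the second rows and dividing by $\theta_0$ yields the explicit expression
\begin{equation*}
    k_{\underline{x}_0}^T = \left[\,0,\;\; \frac{\bar{\tau} - \tau_0}{\Lambda_0 \bar{\tau}},\;\; \frac{\tau_0 - \Lambda_0 \bar{\tau}}{\Lambda_0 \bar{\tau}}\,\right],
\end{equation*}
which is precisely the reduced-order counterpart of $k_{x,j}$ in Lemma~\ref{Lemma for Kxj} (the fictitious spacing-error component is simply dropped for the leader, consistent with the reduced state $\underline{x}_{r,0}$).

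For the second identity $\underline{A}_c = \underline{A}_{m,0} + \underline{B}_u \theta_0 k_{\tilde{\underline{x}}_0}^T$, the matrix $\underline{A}_c$ is defined so that its first row is $[0,\;1,\;0]$ and its third row is $[0,\;0,\;-1/h]$, matching $\underline{A}_{m,0}$; the row-agreement conditions are therefore satisfied by construction. Equating the second rows then forces
\begin{equation*}
    k_{\tilde{\underline{x}}_0}^T = \frac{\tau_0}{\Lambda_0}\left[\,\underline{a}_{c,\tilde{v}},\;\; \underline{a}_{c,\tilde{a}} + \tfrac{1}{\bar{\tau}},\;\; \underline{a}_{c,\tilde{u}_{bl}} - \tfrac{1}{\bar{\tau}}\,\right].
\end{equation*}
I do not foresee a genuine obstacle here: the argument is a direct matching verification in the style of Lemma~\ref{Lemma for Kxj}, and the Hurwitz hypothesis on $\underline{A}_c$ plays no role in the existence of $k_{\tilde{\underline{x}}_0}$ itself (it is needed only for subsequent leader stability analysis). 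The only conceptual point worth flagging is that both gains depend on the unknown $\Lambda_0$ and $\tau_0$, which is precisely why the leader design in Section~\ref{MRAC for Leader} must replace these ideal gains with online estimates, mirroring the situation for the followers.
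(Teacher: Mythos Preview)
Your proposal is correct and follows essentially the same approach as the paper: both give a direct matching verification that exhibits the explicit gains, and your expressions for $k_{\underline{x}_0}^T$ and $k_{\tilde{\underline{x}}_0}^T$ coincide (after factoring) with those stated in the paper. Your write-up is in fact slightly more detailed than the paper's, since you spell out the row-structure argument via $\underline{B}_u=[0,1,0]^T$ that the paper leaves implicit.
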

\begin{proof}
    It can be shown that the following $k_{\underline{x}_{0}}$ and $k_{ \tilde{\underline{x}}_{0}}$ satisfy \eqref{matching condition of leader AP}
    \begin{equation*}
        k_{\underline{x}_{0}}^T = \left [0 \quad \frac{(\Bar{\tau} - \tau_0)}{\Lambda_0 \Bar{\tau}} \quad \frac{\tau_0 - \Lambda_0 \Bar{\tau}}{\Lambda_0 \Bar{\tau}} \right ]
    \end{equation*}
    \begin{equation*}
        k_{ \tilde{\underline{x}}_{0}}^T=\left[\frac{\tau_0\underline{a}_{c, \tilde{v}}}{\Lambda_0} \quad \frac{(\bar{\tau}\underline{a}_{c,\tilde{a}}+1)\tau_0}{\Lambda_0 \bar{\tau}} \quad \frac{(\bar{\tau}\underline{a}_{c,\tilde{u}_{bl}}-1)\tau_0}{\Lambda_0 \bar{\tau}}\right].
    \end{equation*}
\end{proof} 
It should be noted that the ideal controller gains $k_{\underline{x}_{0}}$ and $k_{ \tilde{\underline{x}}_{0}}$ require knowledge of $\Lambda_0$ and $\tau_0$, which are unknown and therefore, $u_{ad,0}(t)$ for \eqref{vehicle dynamics for zero for AP} is defined as follows
\begin{equation}\label{adaptive control input for zero}
    u_{ad,0}(t) = \hat{k}_{\underline{x}_{0}}^T(t) \underline{x}_{0}(t)+ \hat{k}_{ \tilde{\underline{x}}_{0}}^T(t) \tilde{\underline{x}}_{0}(t) ,  
\end{equation}
where $\hat{k}_{\underline{x}_{0}}(t) \in \mathbb{R}^{3}$ is the online estimate of $k_{\underline{x}_{0}}$ and $\hat{k}_{ \tilde{\underline{x}}_{0}}(t) \in \mathbb{R}^{3}$ is the online estimate of ${k}_{ \tilde{\underline{x}}_{0}}$, whereas $ \tilde{\underline{x}}_{0}(t)$ is the tracking error defined as
\begin{equation}\label{tracking error for leader}
    \tilde{\underline{x}}_{0}(t) = \underline{x}_{0}(t) - \underline{x}_{r,0}(t).
\end{equation}
Define parameter estimation error as
\begin{equation}\label{parameter estimation error for leader}
    \Tilde{k}_{\underline{x}_{0}}(t) = \hat{k}_{\underline{x}_{0}}(t) -k_{\underline{x}_{0}}.
\end{equation}
\begin{equation}\label{tilde parameter estimation error for leader}
    \Tilde{k}_{\tilde{\underline{x}}_{0}}(t) = \hat{k}_{\tilde{\underline{x}}_{0}}(t) -k_{\tilde{\underline{x}}_{0}}.
\end{equation}
Using \eqref{vehicle dynamics for zero for AP}, \eqref{matching condition of leader AP}, \eqref{adaptive control input for zero}, \eqref{tracking error for leader}, \eqref{parameter estimation error for leader} and \eqref{tilde parameter estimation error for leader}, the tracking error dynamics is
\begin{equation}\label{state tracking error dynamics for leader}
    \dot {\tilde{\underline{x}}}_{0} = \underline{A}_c \tilde{\underline{x}}_{0} + \underline{B}_u \theta_0 \Tilde{k}_{\underline{x}_{0}}^T \underline{x}_{0}+\underline{B}_u \theta_0\Tilde{k}_{\tilde{\underline{x}}_{0}}^T \tilde{\underline{x}}_{0} 
\end{equation}
Let the parameter update law for the gain in \eqref{adaptive control input for zero} be designed as
\begin{equation}\label{control law for leader}
\dot {\hat{k}}_{\underline{x}_{0}} = - \Gamma_{\underline{x}_{0}} \psi^{'} (\left \| \tilde{\underline{x}}_{0} \right \|_{P_{m,0}}) \underline{x}_{0} \tilde{\underline{x}}_{0}^T P_{m,0} \underline{B}_u,
\end{equation}
\begin{equation}\label{tilde control law for leader}
\dot {\hat{k}}_{\tilde{\underline{x}}_{0}} = - \Gamma_{\underline{x}_{0}} \psi^{'} (\left \| \tilde{\underline{x}}_{0} \right \|_{P_{m,0}}) \tilde{\underline{x}}_{0} \tilde{\underline{x}}_{0}^T P_{m,0} \underline{B}_u,
\end{equation}
where $P_{m,0}$ is a positive definite matrix satisfying
\begin{equation}\label{PD equation for leader}
     A_{c}^T P_{m,0} + P_{m,0} A_{c}=-Q_{c}, 
\end{equation}
for any design matrix $Q_{c} = Q_{c}^T >0$.
From \eqref{control law for leader} and \eqref{tilde control law for leader}, the parameter estimation error dynamics are as follows
\begin{equation} \label{control law adaptation, no composite adaptation for leader}
    \dot {\tilde{k}}_{\underline{x}_{0}} = - \Gamma_{\underline{x}_{0}} \psi^{'} (\left \| \tilde{\underline{x}}_{0} \right \|_{P_{m,0}}) \underline{x}_{0} \tilde{\underline{x}}_{0}^T P_{m,0} \underline{B}_u,
\end{equation}
\begin{equation}\label{tilde control law adaptation}
\dot {\tilde{k}}_{\tilde{\underline{x}}_{0}} = - \Gamma_{\tilde{\underline{x}}_{0}} \psi^{'} (\left \| \tilde{\underline{x}}_{0} \right \|_{P_{m,0}}) \tilde{\underline{x}}_{0} \tilde{\underline{x}}_{0}^T P_{m,0} \underline{B}_u.
\end{equation}
\begin{theorem}\label{lyapunov theorem leader}
     For the leader dynamics of AP in \eqref{vehicle dynamics for zero for AP} and the leader dynamics of RP in \eqref{vehicle dynamics for zero for RP}, the control input \eqref{adaptive control input for zero} and the adaptive update laws \eqref{control law for leader}, \eqref{tilde control law for leader} ensure the following:
     \begin{enumerate}
         \item The origin of the error $(\tilde{\underline{x}}_{0}(t), \Tilde{k}_{\underline{x}_{0}} (t), \Tilde{k}_{\tilde{\underline{x}}_{0}}(t))$ dynamics in \eqref{state tracking error dynamics for leader},\eqref{control law adaptation, no composite adaptation for leader} and \eqref{tilde control law adaptation}, is Lyapunov stable.
         \item The state tracking error is guaranteed to satisfy 
     \begin{equation}\label{error bound leader}
         \left \| \tilde{\underline{x}}_{0}(t) \right \|_{P_{m,0}} < c,\quad \forall t \geq 0,\quad \forall j \in \alpha_N.
     \end{equation}
     where $c$ is defined in \eqref{blf candidate equation}.
     \end{enumerate}
\end{theorem}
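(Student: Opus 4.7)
The plan is to follow the same set-theoretic MRAC analysis used in the proof of Theorem \ref{lyapunov theorem}, but extended to accommodate the two unknown gains $k_{\underline{x}_0}$ and $k_{\tilde{\underline{x}}_0}$. Recalling that $\theta_0 = \Lambda_0/\tau_0 > 0$, I would propose the composite generalized barrier Lyapunov candidate
\begin{equation*}
V_0(\tilde{\underline{x}}_0,\tilde{k}_{\underline{x}_0},\tilde{k}_{\tilde{\underline{x}}_0}) = \tfrac{1}{2}\psi(\|\tilde{\underline{x}}_0\|_{P_{m,0}}) + \tfrac{\theta_0}{2}\tilde{k}_{\underline{x}_0}^T \Gamma_{\underline{x}_0}^{-1}\tilde{k}_{\underline{x}_0} + \tfrac{\theta_0}{2}\tilde{k}_{\tilde{\underline{x}}_0}^T \Gamma_{\tilde{\underline{x}}_0}^{-1}\tilde{k}_{\tilde{\underline{x}}_0},
\end{equation*}
which is positive definite, radially unbounded in the parameter-error coordinates, and blows up as $\|\tilde{\underline{x}}_0\|_{P_{m,0}} \to c$ by Definition \ref{blf}.

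Differentiating $V_0$ along the closed-loop error dynamics \eqref{state tracking error dynamics for leader}, \eqref{control law adaptation, no composite adaptation for leader}, \eqref{tilde control law adaptation} and applying the Lyapunov equation \eqref{PD equation for leader}, I expect to obtain, in direct analogy with \eqref{time derivative of Lyapunov candidate adaptive},
\begin{equation*}
\begin{split}
\dot{V}_0 = &-\tfrac{1}{2}\psi'(\|\tilde{\underline{x}}_0\|_{P_{m,0}})\,\tilde{\underline{x}}_0^T Q_c \tilde{\underline{x}}_0\\
&+\theta_0\,\psi'(\|\tilde{\underline{x}}_0\|_{P_{m,0}})\,\underline{x}_0^T \tilde{k}_{\underline{x}_0}\,\underline{B}_u^T P_{m,0}\tilde{\underline{x}}_0\\
&+\theta_0\,\psi'(\|\tilde{\underline{x}}_0\|_{P_{m,0}})\,\tilde{\underline{x}}_0^T \tilde{k}_{\tilde{\underline{x}}_0}\,\underline{B}_u^T P_{m,0}\tilde{\underline{x}}_0\\
&+\theta_0\,\dot{\hat{k}}_{\underline{x}_0}^T\Gamma_{\underline{x}_0}^{-1}\tilde{k}_{\underline{x}_0} + \theta_0\,\dot{\hat{k}}_{\tilde{\underline{x}}_0}^T\Gamma_{\tilde{\underline{x}}_0}^{-1}\tilde{k}_{\tilde{\underline{x}}_0}.
\end{split}
\end{equation*}
The two adaptive laws \eqref{control law for leader} and \eqref{tilde control law for leader} are designed precisely so that the last line cancels the two bilinear cross terms above, leaving
\begin{equation*}
\dot{V}_0 \leq -\tfrac{1}{2}\psi'(\|\tilde{\underline{x}}_0\|_{P_{m,0}})\,\tilde{\underline{x}}_0^T Q_c\,\tilde{\underline{x}}_0 \leq 0,
\end{equation*}
which, being negative semi-definite in the error coordinates, establishes Lyapunov stability of the origin of $(\tilde{\underline{x}}_0,\tilde{k}_{\underline{x}_0},\tilde{k}_{\tilde{\underline{x}}_0})$.

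For the second claim I would argue exactly as in the second part of the proof of Theorem \ref{lyapunov theorem}: monotonicity of $V_0$ gives $V_0(t) \leq V_0(0) < \infty$, so in particular $\psi(\|\tilde{\underline{x}}_0(t)\|_{P_{m,0}}) \in \mathcal{L}_\infty$; property iii) of Definition \ref{blf} (the barrier diverging at $c$) then forces $\|\tilde{\underline{x}}_0(t)\|_{P_{m,0}} < c$ for all $t\geq 0$, as required by \eqref{error bound leader}.

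The main subtlety, relative to Theorem \ref{lyapunov theorem}, is the presence of the extra feedback term $\underline{B}_u\theta_0\tilde{k}_{\tilde{\underline{x}}_0}^T \tilde{\underline{x}}_0$ in the leader error dynamics, which creates a second indefinite cross term in $\dot{V}_0$ of a structurally different form (quadratic in $\tilde{\underline{x}}_0$ rather than bilinear in $\underline{x}_0$ and $\tilde{\underline{x}}_0$). This is the reason two separate update laws had to be introduced, and the bulk of the technical check is verifying that \eqref{tilde control law for leader} has precisely the right sign and gradient structure to annihilate that additional term while keeping $V_0$ well-defined as a barrier function (which is ensured by $\theta_0 > 0$).
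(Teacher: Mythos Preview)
Your proposal is correct and follows essentially the same approach as the paper: the same composite generalized barrier Lyapunov candidate, the same cancellation of the two cross terms via the two adaptive laws, and the same barrier argument to conclude \eqref{error bound leader}. Your closing remark about the structural role of the second update law \eqref{tilde control law for leader} is a nice clarification that the paper leaves implicit.
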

\begin{proof}
$1)$ Let us define the following radially unbounded Lyapunov candidate
\begin{equation}\label{Lyapunov candidate adaptive leader}
\begin{split}
    &V_0(\tilde{\underline{x}}_{0}(t), \Tilde{k}_{\underline{x}_{0}} (t), \Tilde{k}_{\tilde{\underline{x}}_{0}}(t)) = \frac{1}{2} \psi (\left \| \tilde{\underline{x}}_{0}(t) \right \|_{P_{m,0}}) \\
    &+ \frac{\theta_0}{2}\tilde{k}_{\underline{x}_{0}}^T (t) \Gamma_{\underline{x}_{0}}^{-1}\tilde{k}_{\underline{x}_{0}} (t)+ \frac{\theta_0}{2}\tilde{k}_{\tilde{\underline{x}}_{0}}^T (t) \Gamma_{\tilde{\underline{x}}_{0}}^{-1}\tilde{k}_{\tilde{\underline{x}}_{0}} (t).
\end{split}
\end{equation}
Taking the time derivative of \eqref{Lyapunov candidate adaptive leader}, substituting \eqref{state tracking error dynamics for leader} in it, and using \eqref{PD equation for leader}, the following can be obtained
\begin{equation}\label{time derivative of Lyapunov candidate adaptive leader}
\begin{split}
    &\dot V_0(\tilde{\underline{x}}_{0}(t), \Tilde{k}_{\underline{x}_{0}} (t), \Tilde{k}_{\tilde{\underline{x}}_{0}}(t)) = -\frac{1}{2}\psi^{'} (\left \| \tilde{\underline{x}}_{0} \right \|_{P_{m,0}}) \tilde{\underline{x}}_{0}^T Q_c \tilde{\underline{x}}_{0}\\  &+\theta_0 \psi^{'} (\left \| \tilde{\underline{x}}_{0} \right \|_{P_{m,0}}) \underline{x}_{0}^{T}\tilde{k}_{\underline{x}_{0}}B_u^T P_{m,0} \tilde{\underline{x}}_{0}+\theta_0 \dot{\hat {k}}_{\underline{x}_{0}}^T \Gamma_{\underline{x}_{0}}^{-1}\tilde{k}_{\underline{x}_{0}} \\
    &+\theta_0 \psi^{'} (\left \| \tilde{\underline{x}}_{0} \right \|_{P_{m,0}}) \tilde{\underline{x}}_{0}^{T}\tilde{k}_{\tilde{\underline{x}}_{0}}B_u^T P_{m,0} \tilde{\underline{x}}_{0}+\theta_0 \dot{\hat {k}}_{\tilde{\underline{x}}_{0}}^T \Gamma_{\tilde{\underline{x}}_{0}}^{-1}\tilde{k}_{\tilde{\underline{x}}_{0}}.
    \end{split}
\end{equation}
Considering the adaptive law as in \eqref{control law for leader} and \eqref{tilde control law for leader}, $ \dot V_0(\tilde{\underline{x}}_{0}(t), \Tilde{k}_{\underline{x}_{0}} (t), \Tilde{k}_{\tilde{\underline{x}}_{0}}(t))$ in \eqref{time derivative of Lyapunov candidate adaptive leader} reduces to
\begin{equation}\label{NSD time derivative adaptive leader}
   \dot V_0(\tilde{\underline{x}}_{0}(t), \Tilde{k}_{\underline{x}_{0}} (t), \Tilde{k}_{\tilde{\underline{x}}_{0}}(t)) \leq -\frac{1}{2}\psi^{'} (\left \| \tilde{\underline{x}}_{0} \right \|_{P_{m,0}}) \tilde{\underline{x}}_{0}^T Q_c \tilde{\underline{x}}_{0} \leq 0.
\end{equation}
In \eqref{NSD time derivative adaptive leader}, $ \dot V_0(\tilde{\underline{x}}_{0}(t), \Tilde{k}_{\underline{x}_{0}} (t), \Tilde{k}_{\tilde{\underline{x}}_{0}}(t))$ is negative semi definite, therefore the origin of $(\tilde{\underline{x}}_{0}(t), \Tilde{k}_{\underline{x}_{0}} (t), \Tilde{k}_{\tilde{\underline{x}}_{0}}(t))$ is Lyapunov stable. \\
$2)$ From the fact that $ \dot V_0(\tilde{\underline{x}}_{0}(t), \Tilde{k}_{\underline{x}_{0}} (t), \Tilde{k}_{\tilde{\underline{x}}_{0}}(t)) \leq 0$, we have $V_0(\tilde{\underline{x}}_{0}(t), \Tilde{k}_{\underline{x}_{0}} (t), \Tilde{k}_{\tilde{\underline{x}}_{0}}(t)) \leq V_0(\tilde{\underline{x}}_{0}(0), \Tilde{k}_{\underline{x}_{0}} (0), \Tilde{k}_{\tilde{\underline{x}}_{0}}(0)) < \infty$. Hence, $( \psi (\left \| \tilde{\underline{x}}_{0}(t) \right \|_{P_{m,0}}),\Tilde{k}_{\underline{x}_{0}} (t) \Tilde{k}_{\tilde{\underline{x}}_{0}}(t)) \in \mathcal{L}_{\infty}$  from \eqref{Lyapunov candidate adaptive leader}, and consequently, $\left \| \tilde{\underline{x}}_{0}(t) \right \|_{P_{m,0}} < c, \forall t \geq 0$ from the Definition \ref{blf}, which also implies $ \tilde{\underline{x}}_{0}(t) \in \mathcal{L}_\infty $. 
\end{proof}
\begin{remark}\label{corollary on tracking error of leader}
    Since $\tilde{e}_0(t)= e_0(t)-e_{0,r}(t)=0 \; \forall t\geq 0$, the state tracking error between the leader of the AP and the leader of the RP satisfies $\tilde{x}_0(t)= (\tilde{e}_0(t) \quad \tilde{v}_0(t) \quad \tilde{a}_0(t) \quad \tilde{u}_{bl,0}(t))\in \mathcal{L}_\infty$.
\end{remark}
\subsection{Platoon Convergence}
In this part of the work, we provide a proof of convergence of the AP dynamics in \eqref{Uncertain vehicle model} to the external positive and string stable VP dynamics as $t \rightarrow \infty$. Define the error between the RP dynamics in \eqref{interconnected form for reference dynamics} and the VP dynamics in \eqref{interconnected form for new reference dynamics} as
\begin{equation}\label{error dynamics between RD and GD}
    \bar{z}(t) = \bar{x}_v(t) - \bar{x}_r(t).
\end{equation}
 Calculating the time derivative of \eqref{error dynamics between RD and GD} and using the dynamics in \eqref{interconnected form for new reference dynamics} and \eqref{interconnected form for reference dynamics}, the following error dynamics is derived
\begin{equation}\label{derivative of error dynamics between RD and GD}
    \dot{\bar{z}}  = \Bar{A}_m \bar{z} - B_{\tilde {\bar {w}}} \tilde {\bar {w}}.
\end{equation}
\begin{theorem}\label{theorem on Z uub}
     The solution of \eqref{derivative of error dynamics between RD and GD} is uniformly ultimately bounded (UUB).
\end{theorem}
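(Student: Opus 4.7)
The plan is to prove UUB via a standard Lyapunov argument for a Hurwitz linear system driven by a bounded perturbation. The first key observation is that the leader block of $\bar{z}$ is identically zero: from the remark following \eqref{interconnected form for reference dynamics}, $x_{r,0}(t) = x_{v,0}(t)$ for all $t \geq 0$, because the VP leader in \eqref{vehicle dynamics for zero} and the RP leader in \eqref{vehicle dynamics for zero for RP} share identical dynamics matrices (both use the nominal $\bar{\tau}$), identical inputs $u_{in}$, and identical initial conditions through \eqref{assumption for RP}. Consequently, the first block of $\bar{z}$ vanishes identically, and the effective error lives on the non-leader subspace spanned by $z_j = x_{v,j} - x_{r,j}$ for $j \in \alpha_N$.

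On that subspace, the restriction of $\bar{A}_m$ is block lower triangular with $A_m$ on every diagonal block and $\bar{B}_w$ on the subdiagonal. By the Routh--Hurwitz conditions \eqref{Routh–Hurwitz conditions} already invoked for the VP stability analysis in Section \ref{String Stability Analysis}, $A_m$ is Hurwitz, and hence so is the restricted $\bar{A}_m$. Therefore there exists $P = P^T > 0$ solving $\bar{A}_m^T P + P \bar{A}_m = -Q$ for any design choice $Q = Q^T > 0$.

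Next I would invoke the MRAC results to bound the forcing term. Theorem \ref{lyapunov theorem} and Theorem \ref{lyapunov theorem leader}, together with Remark \ref{corollary on tracking error of leader}, guarantee that $\|\tilde{x}_j(t)\|_{P_m} < c$ uniformly in $t$ and in $j \in \alpha_N \cup \{0\}$. Since each $\tilde{w}_{j-1}$ is a subvector of $\tilde{x}_{j-1}$ (namely, its $\tilde{v}_{j-1}$ and $\tilde{u}_{bl,j-1}$ components), we obtain a uniform bound $\|\tilde{\bar{w}}(t)\| \leq \bar{w}_{\max}$ for all $t \geq 0$, with $\bar{w}_{\max}$ expressible in terms of $c$ and $\lambda_{\min}(P_m)$. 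The final step is routine: with $V(\bar{z}) = \bar{z}^T P \bar{z}$, differentiating along \eqref{derivative of error dynamics between RD and GD} yields
\begin{equation*}
\dot{V} \leq -\lambda_{\min}(Q)\,\|\bar{z}\|^2 + 2\,\|P B_{\tilde{\bar{w}}}\|\,\bar{w}_{\max}\,\|\bar{z}\|,
\end{equation*}
which is strictly negative whenever $\|\bar{z}\|$ exceeds a computable threshold proportional to $\bar{w}_{\max}/\lambda_{\min}(Q)$. A textbook UUB argument (Khalil \cite{Khalil:1173048}) then closes the proof.

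The main subtlety I expect to be the only real obstacle is recognizing that the full $\bar{A}_m$ is not Hurwitz on its own: the diagonal block $A_{m,0}$ carries zero eigenvalues arising from the fictitious state $e_0$ and from the constant-velocity mode of $v_0$. These modes do not enter the error analysis precisely because the exact cancellation $z_0(t) \equiv 0$ eliminates the $A_{m,0}$ block from $\bar{z}$. Once this observation is in place, the rest reduces to a clean linear-ISS/UUB estimate, and no coupling between the parameter-estimation errors $\tilde{k}_{x,j}$ and $\bar{z}$ needs to be tracked since $\tilde{\bar{w}}$ has already been bounded independently.
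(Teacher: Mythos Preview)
Your proposal is correct and follows essentially the same line as the paper: both arguments reduce to observing that $\bar{z}$ is the state of a stable linear system driven by a bounded input $\tilde{\bar{w}}$, whose boundedness is inherited from the MRAC tracking-error bounds. Your handling of the non-Hurwitz $A_{m,0}$ block via the identity $z_0(t)\equiv 0$ is more explicit than the paper's, which simply records $x_{r,0}(t)=x_{v,0}(t)$ after \eqref{interconnected form for reference dynamics} and then appeals directly to asymptotic stability of the unforced $\bar{z}$-dynamics without spelling out the Lyapunov derivative.
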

\begin{proof}
From the initial condition in \eqref{assumption for RP} and the error in \eqref{error dynamics between RD and GD} the following is achieved
\begin{equation}\label{initial condition of z}
    \bar{z}(0)= 0.
\end{equation}
We can further infer that, for $ \tilde {\bar {w}} (t) = 0$, the equilibrium point of \eqref{derivative of error dynamics between RD and GD} is $\bar {z}_{eq} =  1_{N+1} \otimes 0_{4} $. Furthermore, from the Routh–Hurwitz condition given in \eqref{Routh–Hurwitz conditions}, the system $\dot{\bar{z}} (t)  = \Bar{A}_m \bar{z}(t)$ is asymptotically stable around the equilibrium points.\\
Based on Theorem \ref{lyapunov theorem}, $\tilde{x}_j (t) \in \mathcal{L}_\infty$, therefore, $\tilde{w}_{j-1}(t) \in \mathcal{L}_\infty$ from the RP dynamics in \eqref{reference dynamics detail}. Furthermore, $\tilde {\bar {w}} (t) \in \mathcal{L}_\infty$ from the interconnected RP dynamics in \eqref{interconnected form for reference dynamics}. According to \eqref{derivative of error dynamics between RD and GD}, $\bar{z}$ is the state of an asymptotically stable linear system with bounded external input $\tilde {\bar {w}}(t)$. Therefore, any solution to \eqref{derivative of error dynamics between RD and GD} is UUB.
\end{proof}
\begin{theorem}\label{theorem on Platoon Convergence}
    The state tracking error in \eqref{tracking error} asymptotically converges to zero, i.e., $\lim_{t\rightarrow \infty } \left ( x_j(t) - x_{r,j}(t) \right ) = 0, \quad \forall j \in \alpha_N$. 
\end{theorem}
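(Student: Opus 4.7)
The plan is to invoke Barbalat's Lemma on $\tilde{x}_j(t)$, which requires two ingredients: $\tilde{x}_j \in \mathcal{L}_2$, and $\dot{\tilde{x}}_j \in \mathcal{L}_\infty$ (so that $\tilde{x}_j$ is uniformly continuous). The Lyapunov analysis already done in Theorem \ref{lyapunov theorem} furnishes most of what I need; the main job is to chain together the boundedness results on the VP, RP, and AP, and then exploit the specific form of the generalized barrier Lyapunov function \eqref{blf candidate equation} to upgrade negative semi-definiteness to $\mathcal{L}_2$-integrability.

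First, I would collect the boundedness facts. From Theorem \ref{lyapunov theorem} we have $\tilde{x}_j,\tilde{k}_{x,j}\in\mathcal{L}_\infty$ with $\|\tilde{x}_j\|_{P_m}<c$. From Theorem \ref{Theorem about VP spacing error} the VP state $\bar{x}_v(t)$ is bounded (it converges to the constant equilibrium $\bar{x}_{eq,v,new}$). From Theorem \ref{theorem on Z uub} the error $\bar{z}=\bar{x}_v-\bar{x}_r$ is UUB, hence $\bar{x}_r=\bar{x}_v-\bar{z}\in\mathcal{L}_\infty$. Therefore each $x_{r,j}\in\mathcal{L}_\infty$, and consequently $x_j=\tilde{x}_j+x_{r,j}\in\mathcal{L}_\infty$. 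Substituting into the tracking error dynamics \eqref{state tracking error dynamics}, $\dot{\tilde{x}}_j=A_m\tilde{x}_j+B_u\theta_j\tilde{k}_{x,j}^{T}x_j$ is a sum of bounded terms, so $\dot{\tilde{x}}_j\in\mathcal{L}_\infty$ and $\tilde{x}_j$ is uniformly continuous.

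Next, I would secure the $\mathcal{L}_2$ property. From \eqref{NSD time derivative adaptive},
\begin{equation*}
\dot{V}_j \le -\tfrac{1}{2}\psi'(\|\tilde{x}_j\|_{P_m})\,\tilde{x}_j^{T}Q_m\tilde{x}_j.
\end{equation*}
For the specific barrier function in \eqref{blf candidate equation}, a direct computation using $r=\|\tilde{x}_j\|_{P_m}$ gives $\psi'(r)=\tfrac{2c-r}{2(c-r)^2}$, which is continuous and monotonically increasing on $[0,c)$, so $\psi'(r)\ge \psi'(0)=1/c>0$ throughout $\Omega_c$. Using this uniform positive lower bound and integrating the inequality above on $[0,t]$, I obtain
\begin{equation*}
\tfrac{1}{2c}\int_0^{t}\tilde{x}_j^{T}Q_m\tilde{x}_j\,d\tau \;\le\; V_j(0)-V_j(t)\;\le\;V_j(0)<\infty,
\end{equation*}
since $V_j\ge 0$. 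Letting $t\to\infty$ and using $\lambda_{\min}(Q_m)>0$ yields $\tilde{x}_j\in\mathcal{L}_2$.

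Finally, with $\tilde{x}_j\in\mathcal{L}_2\cap\mathcal{L}_\infty$ and $\dot{\tilde{x}}_j\in\mathcal{L}_\infty$, Barbalat's Lemma \cite{Khalil:1173048} delivers $\tilde{x}_j(t)\to 0$, i.e., $\lim_{t\to\infty}(x_j(t)-x_{r,j}(t))=0$ for all $j\in\alpha_N$. I expect the main obstacle to lie in Step 1: establishing boundedness of $x_j$, because the AP signals propagate recursively through the cascade and the RP interacts with the AP via $w_{j-1}$; the clean resolution is not to attempt a direct bootstrap argument along the cascade but rather to exploit the VP as the common "outer envelope", combining Theorems \ref{Theorem about VP spacing error} and \ref{theorem on Z uub} to bound $x_{r,j}$ without opening up the interconnection. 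Once this is in place, the $\mathcal{L}_2$ extraction only relies on the uniform lower bound of $\psi'$ that the generalized barrier structure explicitly provides.
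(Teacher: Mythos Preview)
Your proposal is correct and follows essentially the same route as the paper: chain the boundedness $\bar{x}_v\in\mathcal{L}_\infty$, $\bar{z}\in\mathcal{L}_\infty$ (Theorem~\ref{theorem on Z uub}) $\Rightarrow$ $x_{r,j},x_j\in\mathcal{L}_\infty$ $\Rightarrow$ $\dot{\tilde{x}}_j\in\mathcal{L}_\infty$, then exploit the uniform lower bound $\psi'(\cdot)\ge 1/c$ specific to \eqref{blf candidate equation} to extract $\tilde{x}_j\in\mathcal{L}_2$ from \eqref{NSD time derivative adaptive}, and conclude via Barbalat's Lemma. Your explicit computation of $\psi'(r)=(2c-r)/(2(c-r)^2)$ and its monotonicity is slightly more detailed than the paper's treatment (and in fact yields the sharper constant $1/(2c)$ rather than the $1/c$ the paper writes), but the logic is identical.
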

\begin{proof}
    Since $\tilde{k}_{x,j} (t) \in \mathcal{L}_\infty$ from Theorem \ref{lyapunov theorem} and $k_{x,j}$ is constant, thus $\hat{k}_{x,j} (t) \in \mathcal{L}_\infty$  from \eqref{parameter estimation error}. From \eqref{derivative of error dynamics between RD and GD}, $\bar{z}(t) \in \mathcal{L}_\infty$, and $\bar{x}_v (t) \in \mathcal{L}_\infty$ by design. Therefore, $\bar{x}_r (t) \in \mathcal{L}_\infty$ from the error dynamics \eqref{error dynamics between RD and GD}. Since $\bar{x}_r (t)$ is the vector of $x_{r,j} (t)$, then $x_{r,j} (t) \in \mathcal{L}_\infty$. Consequently, ${x}_j (t) \in \mathcal{L}_\infty$ and $u_{bl,j} (t) \in \mathcal{L}_\infty$, from the error dynamics in \eqref{tracking error}. From \eqref{adaptive control input}, $u_{ad,j} (t) \in \mathcal{L}_\infty$, hence $u_j (t) \in \mathcal{L}_\infty$ from \eqref{augmented control input}. In \eqref{state tracking error dynamics}, $A_m$ is Hurwitz; $B_u,\: \theta_j$ are constant; therefore, $\dot {\tilde{x}}_j (t) \in \mathcal{L}_\infty $. This concludes that $\tilde x (t)$ is uniformly continuous.\\
As from Definition \ref{blf} \cite{arabi2018set}, $\psi^{'}(\left \| \Tilde{x}_j(t) \right \|_M)$ is strictly increasing with respect to $(\left \| \Tilde{x}_j (t)\right \|_M)$, from \eqref{psi derivative wrt nx} and \eqref{blf candidate equation} the following holds
\begin{equation}\label{limit on psi das}
    \psi^{'}(\left \| \Tilde{x}_j \right \|_{P_m}) \geq \frac{1}{c} \quad \forall t \geq 0.
\end{equation}
Therefore, using \eqref{limit on psi das} in \eqref{NSD time derivative adaptive}, the following is achieved 
\begin{equation}\label{v dot in new form}
    \dot V_j(\tilde{x}_j,\tilde{k}_{x,j}) \leq -\frac{1}{c} \lambda_{min} (Q_m) \left \| \Tilde{x}_j \right \|^2\leq 0, 
\end{equation}
where $\lambda_{min} (Q_m)$ is the minimum eigenvalue of the positive definite matrix $Q_m$. Applying integration on both sides of \eqref{v dot in new form} w.r.t. time $t$ gives
\begin{equation}\label{integration of v dot}
    V(\infty) - V(0) \leq -\frac{1}{c} \lambda_{min} (Q_m) \int_{0}^{\infty} \left \| \Tilde{x}_j \right \|^2. 
\end{equation}
After rearranging \eqref{integration of v dot}, the following is concluded
\begin{equation}\label{rearranged integration of v dot}
    \frac{1}{c} \lambda_{min} (Q_m) \int_{0}^{\infty} \left \| \Tilde{x}_j \right \|^2 \leq  V(0) - V(\infty) < \infty.
\end{equation}
From \eqref{rearranged integration of v dot}, we have $\Tilde{x}_j (t) \in \mathcal{L}_2$. Therefore, from Theorem \ref{lyapunov theorem} and Remark \ref{corollary on tracking error of leader}, we have $\Tilde{x}_j (t) \in \mathcal{L}_2 \cap \mathcal{L}_ \infty \quad \forall j \in \alpha_N \cup \left\{0 \right\}$. Using Barbalat's Lemma (Lemma $8.2$
 of \cite{Khalil:1173048}), asymptotic convergence of $\tilde{x}_j (t)$ is ensured, i.e., $\lim_{t \rightarrow \infty}\tilde x_j (t) = 0 \quad \forall j \in \alpha_N \cup \left\{0 \right\}$.
\end{proof}
\begin{proposition}\label{theorem for convergence of v to r}
    The RP dynamics in \eqref{reference dynamics detail} asymptotically converges to the VP dynamics in \eqref{brief of new reference dynamics}, i.e., $\lim_{t\rightarrow \infty } \bar{z}(t) = 0$. Hence, the AP dynamics in \eqref{Uncertain vehicle model} also asymptotically converges to the VP dynamics, i.e., $\lim_{t\rightarrow \infty } \left ( x_{j}(t) - x_{v,j}(t) \right ) = 0$.
\end{proposition}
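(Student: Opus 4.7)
The plan is to chain together Theorem \ref{theorem on Platoon Convergence} (which gives $\tilde{x}_j(t)\to 0$ for all $j$) with the linear error dynamics \eqref{derivative of error dynamics between RD and GD} for $\bar{z}$, and then combine the two via the triangle inequality. The key observation is that the driving input $\tilde{\bar{w}}(t)$ appearing in \eqref{derivative of error dynamics between RD and GD} is built entirely from components of $\tilde{x}_j(t)$, so its asymptotic decay follows for free once Theorem \ref{theorem on Platoon Convergence} is invoked.

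First, I would recall that $\tilde{w}_{j-1}(t) = w_{j-1}(t) - w_{r,j-1}(t)$ has entries $v_{j-1}(t) - v_{r,j-1}(t)$ and $u_{bl,j-1}(t) - u_{bl,r,j-1}(t)$, which are two of the four components of the tracking error vector $\tilde{x}_{j-1}(t)$ (together with Remark \ref{corollary on tracking error of leader} covering the $j=0$ index). Hence $\|\tilde{w}_{j-1}(t)\|\le\|\tilde{x}_{j-1}(t)\|$, and by stacking, $\|\tilde{\bar{w}}(t)\|\to 0$ as $t\to\infty$ by Theorem \ref{theorem on Platoon Convergence}.

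Next, I would exploit that \eqref{derivative of error dynamics between RD and GD} is a linear time-invariant system with Hurwitz state matrix $\bar{A}_m$ (this Hurwitz property is exactly what the Routh--Hurwitz conditions \eqref{Routh–Hurwitz conditions} ensure, as already used in the proof of Theorem \ref{theorem on Platoon Convergence} and Theorem \ref{Theorem about VP spacing error}). Combined with the zero initial condition $\bar{z}(0)=0$ established in \eqref{initial condition of z}, the variation-of-constants formula gives
\begin{equation*}
\bar{z}(t) = -\int_0^t e^{\bar{A}_m(t-\sigma)} B_{\tilde{\bar{w}}}\,\tilde{\bar{w}}(\sigma)\,d\sigma.
\end{equation*}
A standard convergent-input/convergent-state argument for Hurwitz LTI systems (split the integral into $[0,T/2]$ and $[T/2,t]$ for large $T$, use the exponential decay of $e^{\bar{A}_m\cdot}$ on the first part and $\|\tilde{\bar{w}}(\sigma)\|\to 0$ on the second) then yields $\lim_{t\to\infty}\bar{z}(t)=0$. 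Equivalently one may cite the fact that a Hurwitz LTI system driven by an input in a class that decays to zero has a state that decays to zero.

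Finally, for the second assertion I would decompose
\begin{equation*}
x_j(t) - x_{v,j}(t) = \bigl(x_j(t)-x_{r,j}(t)\bigr) + \bigl(x_{r,j}(t)-x_{v,j}(t)\bigr) = \tilde{x}_j(t) - z_j(t),
\end{equation*}
where $z_j(t)$ is the block of $\bar{z}(t)$ corresponding to vehicle $j$. The first term vanishes by Theorem \ref{theorem on Platoon Convergence} and the second by what we just established. I expect the main obstacle to be purely expository: writing the convergent-input/convergent-state argument cleanly for the interconnected system \eqref{derivative of error dynamics between RD and GD} rather than any deep technical difficulty, since all nontrivial machinery (boundedness of signals, Barbalat-based convergence of $\tilde{x}_j$, and Hurwitz-ness of $\bar{A}_m$) has already been assembled in the preceding theorems.
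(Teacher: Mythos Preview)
Your approach is essentially the same as the paper's: use Theorem \ref{theorem on Platoon Convergence} to get $\tilde{\bar{w}}(t)\to 0$ (since its entries are components of the $\tilde{x}_j$'s), feed this into the linear error dynamics \eqref{derivative of error dynamics between RD and GD} to conclude $\bar{z}(t)\to 0$, and then combine $\tilde{x}_j\to 0$ with $z_j\to 0$ to obtain $x_j-x_{v,j}\to 0$. The paper's proof is simply a terser version of exactly this chain; your explicit convergent-input/convergent-state argument spells out what the paper leaves implicit.

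One technical caveat worth flagging: your statement that $\bar{A}_m$ is Hurwitz is not literally correct, since its leading diagonal block $A_{m,0}$ (see \eqref{vehicle dynamics for zero}) has zero eigenvalues. The paper glosses over this in the same way (cf.\ the proof of Theorem \ref{theorem on Z uub}), and the argument is easily repaired: because $x_{r,0}(t)=x_{v,0}(t)$ for all $t\ge 0$ (stated after \eqref{interconnected form for reference dynamics}), the leader block $z_0(t)\equiv 0$, so the dynamics of $\bar{z}$ effectively reduce to the submatrix $\bar{A}_m' = \bar{A}_m(5{:}4(N{+}1),5{:}4(N{+}1))$ used later in Lemma \ref{lemma for error bound}, which \emph{is} Hurwitz (block lower-triangular with $A_m$ on the diagonal). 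With that adjustment your variation-of-constants argument goes through verbatim.
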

\begin{proof}
    Note that $\tilde {\bar {w}}(t)$ is a vector comprising $\tilde{w}_{j}(t)$'s and the elements $\tilde{w}_{j} (t)$ are included in the elements of $\tilde {x}_{j} (t),\quad \forall j \in \alpha_N \cup \left\{0 \right\}$. From Theorem \ref{theorem on Platoon Convergence}, $\lim_{t \rightarrow \infty}\tilde x_j (t) = 0 \quad \forall j \in \alpha_N \cup \left\{0 \right\}$. Thus, $\lim_{t \rightarrow \infty} \tilde {\bar {w}} (t) = 0 \quad \forall j \in \alpha_N \cup \left\{0 \right\}$.
Therefore, from \eqref{derivative of error dynamics between RD and GD}, the following is concluded
\begin{equation}
    \lim_{t\rightarrow \infty } \bar{z}(t) = 0.
\end{equation}
 It has been proven $\forall j \in \alpha_N$ that $\lim_{t\rightarrow \infty } \tilde{x}_{j}(t)  = 0$ in Theorem \ref{theorem on Platoon Convergence} and $\lim_{t\rightarrow \infty }  z_{j}(t)  = 0$. Therefore, the AP dynamics $x_{j}(t)$ is asymptotically tracks the VP dynamics $x_{v,j}(t)$, i.e., $\lim_{t\rightarrow \infty } \left ( x_{j}(t) - x_{v,j}(t) \right ) = 0, \quad \forall j \in \alpha_N \cup \left\{0 \right\}$.
\end{proof}
 From Theorem \ref{Theorem about VP spacing error} and Proposition \ref{theorem for convergence of v to r}, it can be shown that $\lim_{t \rightarrow \infty}v_0(t)= v_{c, new}$.
\begin{corollary}\label{corollary about AP spacing error}
    The spacing error of the AP converges to zero, i.e., $\lim_{t\rightarrow \infty }e_j(t) = 0$ $\forall j \in \alpha _N \cup\{0\} $.
\end{corollary}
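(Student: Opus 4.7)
The plan is to obtain the corollary by combining the two convergence results already proven, namely Theorem~\ref{Theorem about VP spacing error} (which tells us that the VP state converges to the constant equilibrium whose spacing-error component is zero) and Proposition~\ref{theorem for convergence of v to r} (which tells us that the AP state converges to the VP state). The key observation is that $e_j(t)$ is simply the first component of the AP state vector $x_j(t)$, and $e_{v,j}(t)$ is the first component of the VP state vector $x_{v,j}(t)$.

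First, I would handle the non-leader vehicles $j\in\alpha_N$ by writing the triangle-type bound
\begin{equation*}
    |e_j(t)|\;\leq\;|e_j(t)-e_{v,j}(t)|\;+\;|e_{v,j}(t)|.
\end{equation*}
For the first term, Proposition~\ref{theorem for convergence of v to r} gives $\lim_{t\to\infty}(x_j(t)-x_{v,j}(t))=0$ componentwise, and in particular on the first coordinate, so $\lim_{t\to\infty}|e_j(t)-e_{v,j}(t)|=0$. For the second term, Theorem~\ref{Theorem about VP spacing error} asserts that $\bar{x}_v(t)$ converges to $\bar{x}_{eq,v,new}=1_{N+1}\otimes(0\;\;v_{c,new}\;\;0\;\;0)^T$, whose spacing-error entries are identically zero; hence $\lim_{t\to\infty}e_{v,j}(t)=0$. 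Combining the two limits yields $\lim_{t\to\infty}e_j(t)=0$ for every $j\in\alpha_N$.

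Second, the leader case $j=0$ is handled separately but is essentially trivial: by the remark following \eqref{vehicle dynamics for zero}, $e_0$ is a fictitious zero state introduced only for notational consistency with the follower dynamics, and by construction $e_0(t)\equiv 0$ for all $t\geq 0$ (this is also consistent with $\tilde{e}_0(t)=0$ used in Remark~\ref{corollary on tracking error of leader}). Hence the limit statement trivially holds for $j=0$ as well.

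There is no real obstacle here; the result is essentially a componentwise reading of the asymptotic convergence chain $\text{AP}\to\text{VP}\to$ equilibrium already established. The only point worth double-checking is that the triangle inequality is applied to the correct scalar component of the vector convergence in Proposition~\ref{theorem for convergence of v to r}, which is immediate since convergence of a vector to zero implies convergence of each of its coordinates, and the spacing error sits as the first coordinate of $x_j$ (and of $x_{v,j}$) in the dynamics \eqref{platoon dynamics} and \eqref{new reference dynamics detail}.
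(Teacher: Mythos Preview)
Your proposal is correct and takes essentially the same approach as the paper: both combine Theorem~\ref{Theorem about VP spacing error} (so $e_{v,j}\to 0$) with Proposition~\ref{theorem for convergence of v to r} (so $e_j-e_{v,j}\to 0$) to conclude $e_j\to 0$. Your explicit triangle-inequality formulation and separate treatment of the leader $j=0$ are minor presentational additions but do not change the argument.
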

\begin{proof}
    Since, from Theorem \ref{Theorem about VP spacing error}, $\lim_{t\rightarrow \infty }e_{v,j}(t) =0, \; \forall j \in \alpha_N \cup\{0\}$ and from Proposition \ref{theorem for convergence of v to r} $\lim_{t\rightarrow \infty } \left ( x_{j}(t) - x_{v,j}(t) \right ) = 0$, we have $\lim_{t\rightarrow \infty } \left ( e_{j}(t) - e_{v,j}(t) \right ) = 0$. It is concluded that $\lim_{t\rightarrow \infty }e_j(t) = 0$ $\forall j \in \alpha _N \cup\{0\}$.
\end{proof}

\begin{remark}
The importance of the Corollary \ref{corollary about AP spacing error} is to show that not only $\lim_{t\rightarrow \infty } {e}_{v,j}(t) = 0$, but also $\lim_{t\rightarrow \infty }{e}_{j}(t) = 0$, $\forall j\in\alpha_N \cup\{0\}$. 
\end{remark}
\section{Collision Avoidance in the Actual Platoon}\label{Collision Avoidance}
Lemma \ref{External positivity of inter-vehicle spacing} only guarantees collision avoidance of the VP, but it does not guarantee collision avoidance of the AP. Therefore, in this section, we derive the conditions for collision avoidance of the AP.\\
\begin{lemma}\label{lemma for error bound}
   The state tracking error $\tilde{x}_j^v(t) = x_j(t) - x_{v,j}(t)$ satisfies
    \begin{equation}\label{state tracking error bound between AP and VP}
        \left \| \tilde{x}_j^v(t) \right \| <\bar{\mathcal{Z}} + \Omega \sqrt{N}\bar{\mathcal{Z}},  \quad \forall j \in \alpha_N,
    \end{equation}
    where $\Omega =\max_{\forall t\in[0,\infty)} \int_{0}^{t}\left \| e^{\Bar{A}_m^{'}(t-\tau)} \right \|  \left \|B_{\tilde {\bar {w}}}^{'} \right \|d\tau$\footnote{Since $\Bar{A}_m^{'}$ is Hurwitz, $\Omega$ is a finite constant.} with $\Bar{A}_m^{'} = \Bar{A}_m (5:4(N+1),5:4(N+1))\in\mathbb{R}^{4N\times 4N}$, $B_{\tilde {\bar {w}}}^{'}=B_{\tilde {\bar {w}}}(5:4(N+1),:)\in\mathbb{R}^{4N\times 2N}$ 
        and $\bar{\mathcal{Z}} = \max (\frac{c}{\lambda_{min}(P_{m,0})},\frac{c}{\lambda_{min}(P_{m})})$, $\lambda_{min}(P_{m,0})$ and $\lambda_{min}(P_{m})$ being the minimum eigenvalues   of the matrices $P_{m,0}$ and $P_{m}$, respectively.
\end{lemma}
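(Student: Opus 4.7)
The plan is to split $\tilde{x}_j^v(t)$ into a tracking-error piece and a reference-vs-virtual piece by writing
\[
\tilde{x}_j^v(t)=\bigl(x_j(t)-x_{r,j}(t)\bigr)+\bigl(x_{r,j}(t)-x_{v,j}(t)\bigr)=\tilde{x}_j(t)-z_j(t),
\]
where $z_j(t)$ denotes the $j$-th $4$-block of $\bar{z}(t)$ from \eqref{error dynamics between RD and GD}, and then bound each piece separately using the triangle inequality $\|\tilde{x}_j^v(t)\|\le\|\tilde{x}_j(t)\|+\|z_j(t)\|$.

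For the first piece, Theorem \ref{lyapunov theorem} already guarantees $\|\tilde{x}_j(t)\|_{P_m}<c$ for $j\in\alpha_N$ and Theorem \ref{lyapunov theorem leader} guarantees the analogous bound with $P_{m,0}$ for the leader. Converting the weighted norm to the Euclidean norm via the minimum eigenvalues of $P_m$ and $P_{m,0}$ gives $\|\tilde{x}_j(t)\|<\bar{\mathcal{Z}}$ uniformly in $t$ and $j$, matching the first term in \eqref{state tracking error bound between AP and VP}.

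For the second piece, I would first observe that the RP leader dynamics \eqref{vehicle dynamics for zero for RP} coincide with the VP leader dynamics \eqref{vehicle dynamics for zero} under the initial condition \eqref{assumption for RP}, so the top $4$-block of $\bar{z}(t)$ vanishes identically. This justifies passing to the reduced-order system
\[
\dot{\bar{z}}'=\bar{A}_m'\,\bar{z}'-B_{\tilde{\bar{w}}}'\,\tilde{\bar{w}},\qquad \bar{z}'(0)=0,
\]
whose coefficient matrix $\bar{A}_m'$ is Hurwitz by the Routh--Hurwitz conditions \eqref{Routh–Hurwitz conditions}. The variation-of-constants formula then gives
\[
\bar{z}'(t)=-\int_0^t e^{\bar{A}_m'(t-\tau)}B_{\tilde{\bar{w}}}'\,\tilde{\bar{w}}(\tau)\,d\tau.
\]
Since each $\tilde{w}_j$ is just a projection onto two components of $\tilde{x}_j$, we have $\|\tilde{w}_j(\tau)\|\le\|\tilde{x}_j(\tau)\|<\bar{\mathcal{Z}}$, and because $\tilde{\bar{w}}$ stacks $N$ such blocks, $\|\tilde{\bar{w}}(\tau)\|\le\sqrt{N}\,\bar{\mathcal{Z}}$ uniformly. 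Pulling this supremum out of the integral and invoking the definition of $\Omega$ produces $\|\bar{z}'(t)\|\le\Omega\sqrt{N}\,\bar{\mathcal{Z}}$, hence $\|z_j(t)\|\le\Omega\sqrt{N}\,\bar{\mathcal{Z}}$ for every $j$. Adding the two bounds yields the lemma.

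The only subtle step is the reduction to $\bar{A}_m'$: one must argue carefully that $z_0(t)\equiv 0$ so that the truncated matrix (which discards the marginally stable leader block) is the relevant propagator, and that this is exactly what renders $\Omega$ finite. The remaining estimates are straightforward triangle-inequality and Cauchy-Schwarz bookkeeping.
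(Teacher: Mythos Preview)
Your proposal is correct and follows essentially the same route as the paper: decompose $\tilde{x}_j^v=\tilde{x}_j-z_j$, bound $\|\tilde{x}_j\|$ by $\bar{\mathcal{Z}}$ via Theorems \ref{lyapunov theorem} and \ref{lyapunov theorem leader}, and bound $\|z_j\|$ via variation of constants together with the uniform bound $\|\tilde{\bar w}\|<\sqrt{N}\,\bar{\mathcal{Z}}$. Your explicit justification for reducing to the truncated propagator $\bar{A}_m'$ (namely that $z_0(t)\equiv 0$ because the RP and VP leader dynamics coincide under \eqref{assumption for RP}) is in fact more careful than the paper's proof, which writes the integral \eqref{integration of error dynamics between RD and GD} with the full $\bar{A}_m$ but then silently invokes $\Omega$ defined via $\bar{A}_m'$.
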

\begin{proof}
    Using the initial condition in \eqref{initial condition of z}, the dynamics in \eqref{derivative of error dynamics between RD and GD} is integrated on both sides with respect to $t$
\begin{equation}\label{integration of error dynamics between RD and GD}
   \bar{z}(t) = \int_{0}^{t}e^{\Bar{A}_m(t-\tau)}B_{\tilde {\bar {w}}}\tilde {\bar {w}}(\tau) d\tau.
\end{equation}
Using the bound on $\tilde{x}_j(t)$ in \eqref{error bound}, a bound for $\tilde{w}_{j}(t)$ is as follows
\begin{equation}\label{bound on error of external input}
    \left \| \tilde{w}_{j}(t) \right \|< \left \| \tilde{x}_{j}(t) \right \|<\bar{\mathcal{Z}}.
\end{equation}
Using \eqref{bound on error of external input}, bound for the $\tilde{\bar {w}}(t)$ is derived as follow
\begin{equation}\label{bound for error of compact w}
    \left \| \tilde {\bar {w}}(t) \right \|< \sqrt{N}\bar{\mathcal{Z}}.
\end{equation}
Using \eqref{integration of error dynamics between RD and GD} and \eqref{bound for error of compact w}, bound for $\bar{z}(t)$ is given as follow
\begin{equation}\label{final bound between compact reference models}
     \left \| \bar{z}(t) \right \|<\Omega \sqrt{N}\bar{\mathcal{Z}}.
\end{equation}
The following state tracking error bound can also be concluded from the bound in \eqref{final bound between compact reference models}
\begin{equation}\label{final bound for jth reference models}
   \left \| x_{v,j}(t) - x_{r,j}(t) \right \|= \left \| z_j(t) \right \|<\Omega \sqrt{N}\bar{\mathcal{Z}}.
\end{equation}
Using the bound for the $\left \|\tilde {x}_j(t) \right \|$ in \eqref{error bound} and $z_j(t)$ in \eqref{final bound for jth reference models}, the following inequality holds
\begin{equation}
\begin{split}
    \left \| \tilde{x}_j^v(t) \right \| &=  \left \| x_j(t) - x_{v,j}(t) \right \|\\ & \leq \left \| \tilde {x}_j(t)\right\| + \left \| z_j(t) \right \|\\
    &<\bar{\mathcal{Z}} + \Omega \sqrt{N}\bar{\mathcal{Z}}.
\end{split}
\end{equation}
This completes the proof.
\end{proof}
The following main result guarantees collision avoidance of the entire AP.
\begin{theorem}\label{theorem for Collision Avoidance}
    If
    \begin{equation}\label{inequality about standstill distance}
    \begin{split}
         r_j > (\bar{\mathcal{Z}} + \Omega \sqrt{N}\bar{\mathcal{Z}})+h(\bar{\mathcal{Z}} + \Omega \sqrt{N}\bar{\mathcal{Z}}),
    \end{split}
    \end{equation}
    the actual spacing between the vehicles in the AP is always positive, i.e., $ s_{j-1,j}(t) > 0$.
\end{theorem}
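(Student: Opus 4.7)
The plan is to reduce the AP spacing $s_{j-1,j}(t)$ to the VP pseudo spacing $\bar{s}_{v,j-1,j}(t)$ (which is guaranteed non-negative by external positivity) plus a perturbation controlled by $\tilde{x}_j^v(t)=x_j(t)-x_{v,j}(t)$, and then invoke the bound from Lemma~\ref{lemma for error bound} together with the hypothesis~\eqref{inequality about standstill distance}.

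First, from the definition of the spacing error in~\eqref{error dynamics} I would write $s_{j-1,j}(t)=e_j(t)+r_j+h v_j(t)$ for the AP, and analogously $s_{v,j-1,j}(t)=e_{v,j}(t)+r_j+h v_{v,j}(t)$ for the VP. Combining the latter with~\eqref{spacing} gives the identity
\begin{equation*}
\bar{s}_{v,j-1,j}(t)=e_{v,j}(t)+h v_{v,j}(t),
\end{equation*}
so that, after subtracting and rearranging,
\begin{equation*}
s_{j-1,j}(t)=\bar{s}_{v,j-1,j}(t)+r_j+\bigl(\tilde{e}_j^v(t)+h\,\tilde{v}_j^v(t)\bigr),
\end{equation*}
where $\tilde{e}_j^v$ and $\tilde{v}_j^v$ denote the corresponding components of $\tilde{x}_j^v(t)$.

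Next, by Lemma~\ref{External positivity of inter-vehicle spacing} and Remark~\ref{Collision avoidance of reference platoon}, Assumption~\ref{assumption for VP} together with $v_{v,j-1}(t)\ge 0$ (which propagates through the externally positive transfer function $G_{v,j}(s)=1/(hs+1)$) yields $\bar{s}_{v,j-1,j}(t)\ge 0$ for all $t\ge 0$. Hence the previous identity implies
\begin{equation*}
s_{j-1,j}(t)\;\ge\; r_j+\bigl(\tilde{e}_j^v(t)+h\,\tilde{v}_j^v(t)\bigr).
\end{equation*}

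The remaining step is to lower-bound the perturbation term. Since $|\tilde{e}_j^v(t)|\le\|\tilde{x}_j^v(t)\|$ and $|\tilde{v}_j^v(t)|\le\|\tilde{x}_j^v(t)\|$, the triangle inequality gives $|\tilde{e}_j^v(t)+h\,\tilde{v}_j^v(t)|\le (1+h)\,\|\tilde{x}_j^v(t)\|$, which by Lemma~\ref{lemma for error bound} is strictly bounded by $(1+h)(\bar{\mathcal{Z}}+\Omega\sqrt{N}\bar{\mathcal{Z}})$. Therefore
\begin{equation*}
s_{j-1,j}(t)\;>\; r_j-(1+h)\bigl(\bar{\mathcal{Z}}+\Omega\sqrt{N}\bar{\mathcal{Z}}\bigr),
\end{equation*}
and the hypothesis~\eqref{inequality about standstill distance} makes the right-hand side non-negative, yielding $s_{j-1,j}(t)>0$ for all $t\ge 0$ and all $j\in\alpha_N$.

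Most of the technical difficulty is already absorbed by the earlier results: the external positivity of the VP (Lemma~\ref{External positivity of inter-vehicle spacing}) and the uniform deviation bound between AP and VP (Lemma~\ref{lemma for error bound}). The only subtle point I anticipate is justifying that $v_{v,j-1}(t)\ge 0$ so that the external positivity argument can be applied — this must rely on Assumption~\ref{assumption for VP} together with an inductive propagation along the platoon index, using that $G_{v,j}(s)=1/(hs+1)$ maps non-negative velocity inputs to non-negative velocity outputs. Once that is established, the proof reduces to the algebraic decomposition and the triangle inequality outlined above.
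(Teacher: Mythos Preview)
Your proposal is correct and follows essentially the same route as the paper: decompose $s_{j-1,j}(t)$ into $\bar{s}_{v,j-1,j}(t)+r_j+\tilde{e}_j^v(t)+h\tilde{v}_j^v(t)$, invoke $\bar{s}_{v,j-1,j}(t)\ge 0$ from Lemma~\ref{External positivity of inter-vehicle spacing}/Remark~\ref{Collision avoidance of reference platoon}, and then bound the perturbation via Lemma~\ref{lemma for error bound}. The paper writes the same chain via $\bar{s}_{j-1,j}(t)=s_{j-1,j}(t)-r_j$ and the identity $\bar{s}_{v,j-1,j}=\bar{s}_{j-1,j}-\tilde{e}_j^v-h\tilde{v}_j^v$, arriving at the identical inequality~\eqref{final bound for actual spacing}; your explicit mention of the inductive propagation of $v_{v,j-1}(t)\ge 0$ is exactly what the paper leaves implicit in Remark~\ref{Collision avoidance of reference platoon}.
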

\begin{proof}
Let us define the spacing error of the VP for the spacing in \eqref{spacing}
\begin{equation}\label{spacing error of the VP}
    e_{v,j}(t) = s_{v,j-1,j}(t) - s_{d,v,j}(t)
\end{equation}
where $s_{d,v,j}(t) = r_j + h v_{v,j}(t)$. Further defining the error between \eqref{error dynamics} and \eqref{spacing error of the VP}, the following holds
\begin{equation}\label{error dynamics of the spacing error}
    \begin{split}
    \tilde{e}_j^v (t)& = e_j (t) - e_{v,j}(t)\\
     &= (\bar{s}_{j-1,j}(t)-\bar{s}_{v,j-1,j}(t)) -h \tilde{v}_j^v(t) 
    \end{split}
\end{equation}
where $\tilde{v}_j^v(t) =v_{j}(t)-v_{v,j}(t) $ and $\bar{s}_{j-1,j}(t) = s_{j-1,j}(t)-r_j$.\\
After rearranging \eqref{error dynamics of the spacing error}, the spacing of RP dynamics is written as follows
\begin{equation}\label{spacing for reference dynamics}
    \bar{s}_{v,j-1,j}(t)= \bar{s}_{j-1,j}(t)-\tilde{e}_j^v (t)-h \tilde{v}_j^v(t).
\end{equation}
Since from Lemma \ref{External positivity of inter-vehicle spacing} and Remark \ref{Collision avoidance of reference platoon}, $\bar{s}_{v,j-1,j}(t) \geq 0, \forall t \geq 0$, the following inequality is achieved from \eqref{spacing for reference dynamics}
\begin{equation}\label{spacing for uncertain platoon}
    \bar{s}_{j-1,j}(t)\geq \tilde{e}_j^v (t)+h \tilde{v}_j^v(t).
\end{equation}
The bound in \eqref{state tracking error bound between AP and VP} implies
\begin{equation}\label{bound on state error}
\begin{split}
    \left | \tilde{e}_j^v \right |<\bar{\mathcal{Z}} + \Omega \sqrt{N}\bar{\mathcal{Z}} \quad \left | \tilde{v}_j^v \right |<\bar{\mathcal{Z}} + \Omega \sqrt{N}\bar{\mathcal{Z}}.
\end{split}
\end{equation}
Using the bound in \eqref{bound on state error} in inequality \eqref{spacing for uncertain platoon}, the following bound is achieved
\begin{equation}\label{inequality bound for actual spacing}
\begin{split}
   \left |\tilde{e}_j^v (t)+h\tilde{v}_j^v(t)  \right | < (\bar{\mathcal{Z}} + \Omega \sqrt{N}\bar{\mathcal{Z}})+h(\bar{\mathcal{Z}} + \Omega \sqrt{N}\bar{\mathcal{Z}}).
\end{split}
\end{equation}
From \eqref{spacing for uncertain platoon} and \eqref{inequality bound for actual spacing}
\begin{equation}\label{bound for actual spacing}
\begin{split}
   \bar{s}_{j-1,j}(t) > -((\bar{\mathcal{Z}} + \Omega \sqrt{N}\bar{\mathcal{Z}})+h(\bar{\mathcal{Z}} + \Omega \sqrt{N}\bar{\mathcal{Z}})).
\end{split}
\end{equation}
Therefore, using this bound in \eqref{actual spacing between the vehicle}, the following inequality for the actual spacing is concluded
\begin{equation}\label{final bound for actual spacing}
\begin{split}
   s_{j-1,j}(t) > r_j -((\bar{\mathcal{Z}} + \Omega \sqrt{N}\bar{\mathcal{Z}})+h(\bar{\mathcal{Z}} + \Omega \sqrt{N}\bar{\mathcal{Z}})).
\end{split}
\end{equation}
From the inequality in \eqref{inequality about standstill distance} and \eqref{final bound for actual spacing}, the following is concluded for the actual spacing
\begin{equation}\label{inequality for s actual}
     s_{j-1,j}(t) > 0,\ \forall t\geq 0,\ \forall j\in\alpha_{N}.
\end{equation}
\end{proof}
\section{Conclusion}\label{Conclusion}
In contrast to analyzing predecessor-follower pairs as the state-of-the-art platooning methods, this work has proposed a virtual-platoon-based (VP-based) analysis, which offers a generalized framework ensuring
desirable properties (asymptotic stability, string stability, external positivity, etc.) for the entire platoon. A set-theoretic MRAC-based CACC architecture was designed for the actual platoon, subject to uncertain and possibly heterogeneous engine efficiency. Rigorous mathematical analysis, utilizing the Lyapunov technique in conjunction with the VP concept, demonstrated stability and the platoon's convergence of the state of the heterogeneous actual platoon to the state of a homogeneous reference platoon. Furthermore, the set-theoretic MRAC-based CACC architecture allows to satisfy a user-defined performance bound for the state tracking error. A critical limiting condition for the standstill distance is derived from the bound on the tracking error, ensuring that the actual platoon can avoid inter-vehicle collisions for all times including the initial transient interval. 
Exploring the proposed VP-base analysis in more general multi-hop communication protocols would be an exciting avenue for future research. 
\bibliographystyle{IEEEtran}
\bibliography{references}
\end{document}